\newcommand{\mtx}[1]{\bm{#1}}
\DeclarePairedDelimiter{\ceil}\lceil\rceil
\newtheorem{theorem}{Theorem}
\newtheorem{lemma}{Lemma}
\newtheorem{proposition}{Proposition}
\newtheorem{remark}{Remark}
\newtheorem{corollary}{Corollary}
\begin{document}
%
% paper title
% Titles are generally capitalized except for words such as a, an, and, as,
% at, but, by, for, in, nor, of, on, or, the, to and up, which are usually
% not capitalized unless they are the first or last word of the title.
% Linebreaks \\ can be used within to get better formatting as desired.
% Do not put math or special symbols in the title.
\title{Optimal Load Allocation for Coded Distributed Computation in Heterogeneous Clusters}
%
%
% author names and IEEE memberships
% note positions of commas and nonbreaking spaces ( ~ ) LaTeX will not break
% a structure at a ~ so this keeps an author's name from being broken across
% two lines.
% use \thanks{} to gain access to the first footnote area
% a separate \thanks must be used for each paragraph as LaTeX2e's \thanks
% was not built to handle multiple paragraphs
%

\author{DaeJin~Kim,  Hyegyeong~Park, and~Junkyun~Choi% <-this % stops a space
%\thanks{Manuscript received April 19, 2005; revised August 26, 2015.}
%\thanks{This work was supported in part by the National Research Foundation of Korea under Grant xxxxx, and yyyyy.}
\thanks{D. Kim and J. Choi are with the School of Electrical Engineering, Korea Advanced Institute of Science and Technology (KAIST), Daejeon, 34141, South Korea (e-mail: deejay@kaist.ac.kr; jkchoi59@kaist.edu).}% <-this % stops a space
\thanks{H. Park is with the Computer Science Department, Carnegie Mellon University, Pittsburgh, PA 15213 USA (e-mail: hyegyeop@cs.cmu.edu).}% <-this % stops a space
%\thanks{Corresponding author: Hyegyeong~Park}
}

\maketitle

% As a general rule, do not put math, special symbols or citations
% in the abstract or keywords.
\begin{abstract}
Recently, coding has been a useful technique to mitigate the effect of \textit{stragglers} in distributed computing. However, coding in this context has been mainly explored under the assumption of \textit{homogeneous} workers, although the real-world computing clusters can be often composed of \textit{heterogeneous} workers that have different computing capabilities. The uniform load allocation without the awareness of heterogeneity possibly causes a significant loss in latency. In this paper, we suggest the optimal load allocation for coded distributed computing with heterogeneous workers.
Specifically, we focus on the scenario that there exist workers having the same computing capability, which can be regarded as a group for analysis.
We rely on the lower bound on the expected latency and obtain the optimal load allocation by showing that our proposed load allocation achieves  the minimum of the lower bound for a sufficiently large number of workers. From numerical simulations, when assuming the group heterogeneity, our load allocation reduces the expected latency by orders of magnitude over the existing load allocation scheme.
\end{abstract}

%

% Note that keywords are not normally used for peerreview papers.
\begin{IEEEkeywords}
Coded distributed computing, heterogeneous clusters, optimal load allocation.
\end{IEEEkeywords}

% For peer review papers, you can put extra information on the cover
% page as needed:
% \ifCLASSOPTIONpeerreview
% \begin{center} \bfseries EDICS Category: 3-BBND \end{center}
% \fi
%
% For peerreview papers, this IEEEtran command inserts a page break and
% creates the second title. It will be ignored for other modes.
\IEEEpeerreviewmaketitle

\section{Introduction}
% The very first letter is a 2 line initial drop letter followed
% by the rest of the first word in caps.
% 
% form to use if the first word consists of a single letter:
% \IEEEPARstart{A}{demo} file is ....
% 
% form to use if you need the single drop letter followed by
% normal text (unknown if ever used by the IEEE):
% \IEEEPARstart{A}{}demo file is ....
% 
% Some journals put the first two words in caps:
% \IEEEPARstart{T}{his demo} file is ....
% 
% Here we have the typical use of a "T" for an initial drop letter
% and "HIS" in caps to complete the first word.
%\IEEEPARstart{T}{his} demo file is intended to serve as a ``starter file''
%for IEEE journal papers produced under \LaTeX\ using
%IEEEtran.cls version 1.8b and later.

\IEEEPARstart{D}{istributed} computing has become the mainstream for most computing platforms \cite{Dean12Large, Dean08Mapreduce, Zaharia10Spark} to support today's increasing demand for handling large-scale data and computational workloads. Splitting the computation task into the multiple sub-computations and allocating them to multiple computing nodes, distributed computing improves the system latency enjoying the virtue of concurrent/parallel processing of subtasks. Moreover, the inherent scalability, which means that users can easily add workers to the system as the demand for computing grows, leads distributed computing to
the emerging de facto standard for many modern computing architectures.

However, some straggling workers 
from various sources of delay including transient and permanent failures can potentially limit the performance of distributed computing systems, 
since commodity computing nodes are often deployed to enhance the cost-effectivity. 
To address this issue, in \cite{Lee18Speeding} the authors introduce the notion of \textit{coded computation} which exploits
the redundant sub-computations to speed up the distributed matrix-vector multiplication.
More specifically, coding improves the computation latency by allowing a subset of the computation results to complete the overall computation.  For example, to multiply a matrix $\boldsymbol{A} (\in \mathbb{R}^{2k\times d})$ with a vector $\mathbf{x} (\in \mathbb{R}^{d\times 1})$, one can split $\boldsymbol{A}$ into two submatrices, i.e., $\boldsymbol{A} = [\boldsymbol{A}_1; \boldsymbol{A}_2]$ where $\boldsymbol{A}_1 \in \mathbb{R}^{k\times d}$ and $\boldsymbol{A}_2 \in \mathbb{R}^{k\times d}$. The submatrices are encoded to $[\boldsymbol{A}_1; \boldsymbol{A}_2; \boldsymbol{A}_1+\boldsymbol{A}_2]$ and the three subtasks $\boldsymbol{A}_1 \mathbf{x}$, $\boldsymbol{A}_2 \mathbf{x}$ and  $(\boldsymbol{A}_1 +\boldsymbol{A}_2) \mathbf{x}$ are assigned to the three distributed workers, respectively. Then, collecting any two out of three subtask results would be sufficient for obtaining $\boldsymbol{A}\mathbf{x}$, which is directly translated into the tolerance to the one-straggling worker and reduction in computation latency.

Galvanized by the work of \cite{Lee18Speeding}, the idea of coded computation using \textit{homogeneous} workers has been widely used for the various types of computations: coding for speeding up high-dimensional matrix-matrix multiplications \cite{Lee17High-dimensional, Yu17Polynomial, Baharav18Straggler-Proofing, Dutta18Optimal, Park19Irregular}, matrix multiplication with the awareness of practical computing cluster architectures \cite{Park18Hierarchical, Li17Coding, Gupta17Locality}, distributed optimization \cite{Karakus17Straggler, Karakus17Encoded, Ferdinand19Anytime, Zhu17A}, gradient descent \cite{Tandon17Gradient, Raviv17Gradient, Halbawi17Improving, Ozfaturay18Speeding,Ozfatura18Distributed,Yue20Coded}, convolution \cite{Dutta17Coded}, distributed inference and transmission in mobile edge computing \cite{Zhang19OnModel}, and node-selection based subtask assignment method \cite{Zhao19ANode}.
% and convolution \cite{Dutta17Coded}.
 The work of \cite{Ferdinand18Hierarchical, Kiani18Exploitation, Mallick18Rateless} deals with exploiting stragglers' computation results. In \cite{Yapar19Device}, the authors propose a coded caching scheme for a device-to-device network with straggling servers.
A learning-based code design to approximate unavailable computation results is suggested in \cite{Kosaian18Learning}.

Practical large-scale systems, however, usually consist of \textit{heterogeneous} workers with different computing capabilities \cite{Zaharia08Improving}.
The uniform allocation of the workload to workers with the ignorance of heterogeneity potentially leads to the loss of system performance. This naturally induces our main problem: how can we optimally allocate the workload to the heterogeneous workers in the distributed matrix-vector multiplication?
Specifically, we focus on the scenario assuming the \textit{group heterogeneity}, which means that the workers having the same computing capability are regarded as a \textit{group} for analysis. 
This group heterogeneity comes from, for example, the incremental deployment of groups of computing machines in data centers.
 Note that our modeling is not limited to the computing cluster architectures with physically grouped workers.

Despite its importance, to the best of our knowledge, only the work of \cite{Reisizadeh19Coded} and  \cite{Kim19Coded} considers the latency of coded matrix multiplication under the assumption of heterogeneous workers.
The work of \cite{Reisizadeh19Coded} resorts to the two-step alternative problem formulation maximizing the expected number of aggregated subtask results from the workers.
We take a different approach to obtain the optimal load allocation; 
we rely on the lower bound on the expected latency and obtain the optimal load allocation by showing that our proposed load allocation achieves  the minimum of the lower bound for a sufficiently large number of workers. 
%Although the work of \cite{Reisizadeh19Coded} deals with more generalized modeling to assume that all the workers are heterogeneous (i.e., no groups), 
The group heterogeneity reflecting the constraints in the practical computing clusters is a key assumption that enables this lower bound approach to yield the optimal load allocation. 
Relative to \cite{Reisizadeh19Coded}, our work suggests the theoretical lower bound for the expected latency and provides a simpler way of proofs thanks to the group heterogeneity. 
%Moreover, the proposed analysis relaxes the underlying assumptions of \cite{Reisizadeh19Coded} in several ways: (1) the proposed analysis is applicable to the regime that complexity of problem does not depend on the size of computing system, while the analysis of \cite{Reisizadeh19Coded} requires a linearly scaling problem complexity in the number of workers, and (2)
Moreover, the proposed analysis is applicable to the regime that the complexity of problem does not depend on the size of computing system, while the analysis of \cite{Reisizadeh19Coded} requires a linearly scaling problem complexity in the number of workers.
In modeling of \cite{Kim19Coded}, every worker is assigned the same fixed number of rows of the uncoded data matrix $\boldsymbol{A}$ even with an increasing number of the workers, which leaves a room for improvement.
By relaxing the above condition in \cite{Kim19Coded}, our load allocation reduces the latency by orders of magnitude over the allocation in \cite{Kim19Coded} as the number of workers increases.

%\begin{figure*}[t]
%\centering
%\begin{multicols}{3}
%     \includegraphics[width=\linewidth]{graph2.png}\par\caption{Utility of IoT devices and MEC with different prices}
%	\label{fig:diff_price}
%    \includegraphics[width=\linewidth]{graph3.png}\par\caption{Utility of IoT devices with different values of $f_i$}
%	\label{fig:IoT_diff_fi}
%	\includegraphics[width=\linewidth]{graph4.png}\par\caption{Utility of MEC with different values of $f_i$}
%	\label{fig:MEC_diff_fi}
%\end{multicols}
%\end{figure*}

%%
\begin{figure}[t]
	\centering{
\vspace{-0.15in}
 \includegraphics[width=1\linewidth]{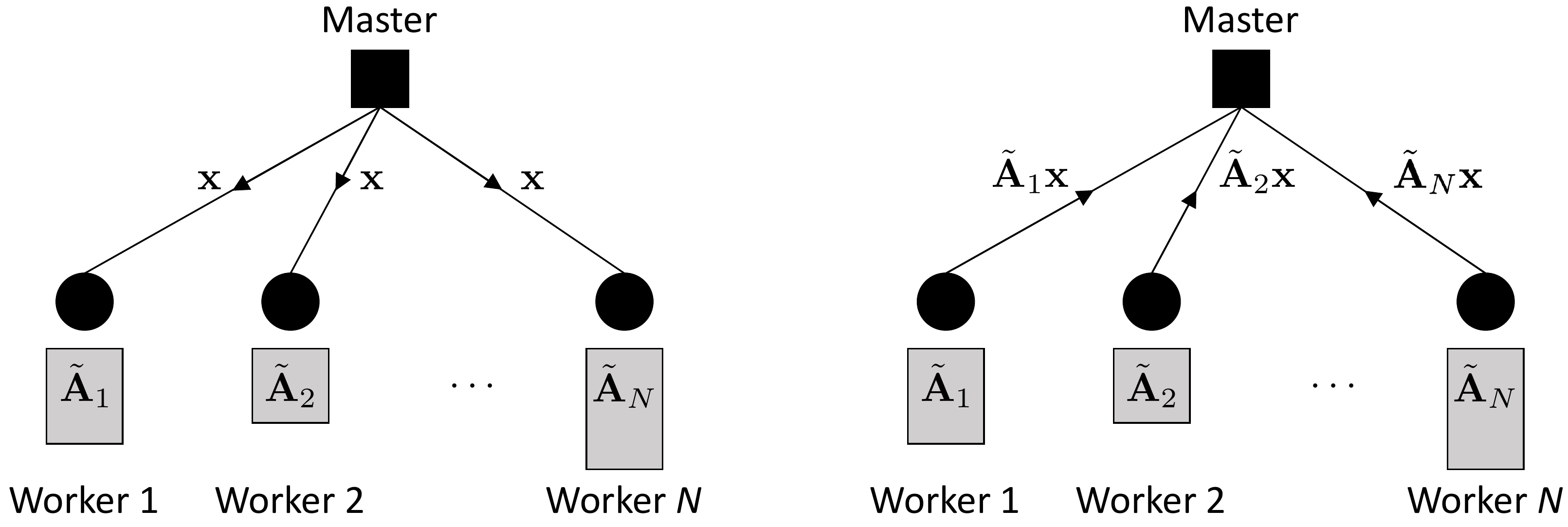} }
\caption{The master sends the input vector $\mathbf{x}$ to the $N$ workers, each of which stores $\boldsymbol{\tilde{A}}_i$. Worker $i$ computes $\boldsymbol{\tilde{A}}_i \mathbf{x}$ with straggling parameter $\mu_i$, and sends back the computation result to the master.}
	\label{Fig:Modeling}
\vskip -0.2in
\vspace{-0.1in}
\end{figure}

\textit{Contribution:} The key contributions of this work are summarized as follows.
\begin{itemize}
	\item We provide a new proof method to derive the optimal load allocation for computing clusters that consist of heterogeneous workers under the classical probabilistic model for latency.
%	\item \cmt{We propose a generalized probabilistic model for latency  in computing clusters that consist of heterogeneous workers.}
	\item For a sufficiently large number of workers, we present the optimal load allocation for heterogeneous workers by finding a lower bound of the expected latency and its achievable scheme. 
	\item We demonstrate the optimal design of the $(n,k)$ maximum distance separable (MDS) code to achieve the minimum expected latency based on the proposed load allocation.

\end{itemize}

\textit{Notations: }
%For convenience we define the following shorthand notations. 
We denote $[n]=\{1,2, \ldots, n\}$ for $n \in \mathbb{N}$.  Let $\mathscr{P}([n])$ be the power set of $[n]$. For any $u \in [n]\cup \{ 0 \}$, we define $N^u = \{ A\in \mathscr{P}([n]) : |A| = u \}$. For nonnegative functions $f(n)$ and $g(n)$, we denote $f(n) = \Theta(g(n))$ if there exist $k_1 > 0$, $k_2 > 0$, $n_0 \in \mathbb{N}$ such that if $n>n_0$, then $k_1g(n) \le f(n) \le k_2 g(n)$. The ceil function of $x (\in \mathbb{R})$, denoted by $\ceil{x}$, returns the smallest integer that is greater than or equal to $x$.

\section{System Model, Model Assumptions, and Problem Formulation} \label{Sec:SystemModel}

We focus on the distributed matrix-vector multiplication over a master-worker setup in heterogeneous clusters. In this section, we describe our computation model and runtime distribution model.

\subsection{Computation Model} 
We assume that there are $N$ workers that are divided into $G$ groups, each of which has a different number of workers and different runtime distribution.\footnote{Although our modeling assumes the group heterogeneity, the latency analysis can be extended to approximate the latency of the computing system with ``fully'' heterogeneous workers by grouping the workers based on the reasonable off-the-shelf clustering methods.}
In this modeling, group $j$ consists of $N_j$ workers, i.e., $\sum_{j \in [G]} N_j = N$. We assign a computation task to multiply a given data matrix $\mtx{A} \in\mathbb{R}^{k \times d}$ with an input vector $\mathbf{x}\in \mathbb{R}^{d \times 1}$ to the $N$ distributed workers.
We apply an $(n, k)$ MDS code to the rows of $\mtx{A}$ for obtaining the coded data matrix $\tilde{\mtx{A}}\in \mathbb{R}^{n \times d}$. Afterwards, the rows of $\tilde{\mtx{A}}$ are grouped into $N$ submatrices as $\tilde{\mtx{A}} = [\tilde{\mtx{A}}_1; \tilde{\mtx{A}}_2; \dots ; \tilde{\mtx{A}}_N]$,
where $\tilde{\mtx{A}}_i \in \mathbb{R}^{l_i \times d}$ is the coded data matrix allocated to worker $i$ and $n = \sum^N_{i=1} l_i  .$
 It is assumed that workers in each group $j$ are assigned the coded data matrix with the same number of rows, denoted by $l_{(j)}$. 
%It follows that
%\begin{equation*}\label{eq:Nln} 
%\sum_{j \in [G]} N_j l_{(j)} = n  .
%\end{equation*}
Then, worker $i$ is assigned a subtask to compute $\tilde{\mtx{A}}_i\mathbf{x}$ and  sends back the product to the master upon finishing its subtask as shown in Fig. \ref{Fig:Modeling}. The master can retrieve $\mtx{A}\mathbf{x}$ by collecting the inner product of any $k$ coded rows with $\mathbf{x}$ due to the MDS property.

%We assume that for a worker in group $j$ the round trip time taken for calculating the inner product of $l_{(j)}$ rows of $\tilde{\mtx{A}}$ with $\mathbf{x}$
%%denoted by $T_0^{(j)}$
%follows a shifted exponential distribution with rate $\mu_{(j)}$ as follows:
%\begin{equation} \label{Eq:CDF_in_Group_j}
%F_{j}(t) = \Pr(T^{(j)}\le t) = 1-e^{-\frac{k \mu_{(j)}}{l_{(j)}} \left (t-\frac{\alpha_{(j)} l_{(j)}}{k} \right)}
%\end{equation}
%for $t \ge \frac{\alpha_{(j)} l_{(j)}}{k}$ and $j\in [G]$, where $\alpha_{(j)}$ is the shift parameter of a worker in group $j$.

\subsection{Runtime Distribution Model}
Let $T^{(j)}$ denote the random variable representing the round trip time taken for calculating the inner product of $l_{(j)}$ rows of $\tilde{\mtx{A}}$ with $\mathbf{x}$ at a worker in group $j$.
We assume that the random variable $T^{(j)}$
%denoted by $T_0^{(j)}$
follows a shifted exponential distribution with rate $\mu_{(j)}$ as follows:
\begin{equation} \label{Eq:CDF_in_Group_j}
F_{j}(t) = \Pr(T^{(j)}\le t) = 1-e^{-\frac{k \mu_{(j)}}{l_{(j)}} \left (t-\frac{\alpha_{(j)} l_{(j)}}{k} \right)}
\end{equation}
for $t \ge \frac{\alpha_{(j)} l_{(j)}}{k}$ and $j\in [G]$, where $\alpha_{(j)}$ is the shift parameter of a worker in group $j$.
The probabilistic model is motivated by the model proposed in \cite{LiangTOFEC14} which is used for modeling latency of file queries from cloud storage systems. The distribution has been widely accepted in the existing literature \cite{Lee18Speeding,Reisizadeh19Coded,Dutta17Coded}.
As demonstrated in \cite{Lee18Speeding,Reisizadeh19Coded}, the shifted exponential model provides a good fit for the runtime distribution over cloud computing platform such as AWS EC2. Moreover, the shifted exponential distribution provides an adequate balance between accuracy and analytical tractability.

%한개의 그룹만 존재하는 경우, 우리가 $\alpha_{(j)} = 1$ 이라고 두면, 각 워커에게 할당되는 일의 양은  $l_{(j)} = \frac{k}{N}$이 된다. 그러면 확률모델 (1)은 [4]에서의 확률모델로 같아진다.

Considering the homogeneous master-worker model as in \cite{Lee18Speeding}, i.e., $G = 1$ and $\alpha_{(j)} = 1$, the number of rows allocated to each worker becomes $l_{(j)} = \frac{k}{N}$.
%\cmt{Regarding there is one group, if we set $\alpha_{(j)} = 1$, then the number of rows allocated to each worker becomes $l_{(j)} = \frac{k}{N}$.}
It follows that the probabilistic model in \eqref{Eq:CDF_in_Group_j} is equal to that of \cite{Lee18Speeding}.
In addition, the difference between the model represented in \eqref{Eq:CDF_in_Group_j} and the model in \cite{Reisizadeh19Coded} is as follows:
This paper assume that the cumulative distribution function (CDF) of task runtime for a worker with straggling parameter $\mu_{(j)}$ and shift parameter $\alpha_{(j)}$ to calculate $k$ rows is
\begin{equation} \label{Eq:CDFmother}
1-e^{-\mu_{(j)}(t-\alpha_{(j)})}.
\end{equation}
On the other hand, the paper in \cite{Reisizadeh19Coded} assume that  the CDF of task runtime for a worker with straggling parameter $\mu_{(j)}$ and shift parameter $\alpha_{(j)}$ to calculate one row is equal to  \eqref{Eq:CDFmother}.

%\cmt{This paper (the paper in \cite{Reisizadeh19Coded}) assume that the round trip time for a worker with straggling parameter $\mu_{(j)}$ to calculate $k$ rows (one row) follows the distribution with CDF 
%$$
%1-e^{-\mu_{(j)}(t-\alpha_j)}.
%$$}
%while this paper the round trip time for a worker with straggling parameter $\mu_{(j)}$ to calculate $k$ rows follows the distribution with CDF 
%$$
%1-e^{-\mu_{(j)}(t-\alpha_j)}
%$$

%\begin{equation*}
%F_{(j)}(t) = \Pr(T_0^{(j)}\le t) = 1-e^{-\mu_{(j)}(t-\alpha_i)}, \hspace{0.2in} t \ge 1 ,
%\end{equation*}
%which is widely accepted in the existing literature \cite{Lee18Speeding}.
%
%We assume that the workers in the same group have the identical computation power. 
%
%
%This implies that the working rates of the workers included in each group are all the same. We denote the working rate of workers in group $j$ as $\mu_{(j)}$. 
%We assume that the time taken for calculating the inner product of $k$ rows of $\mtx{A}$ and $\mathbf{x}$
%
%is a shifted exponential random variable denoted by $T_0^{(j)}$ with rate $\mu_{(j)}$, i.e.,
%$$F_{(j)}(t) = \Pr(T_0^{(j)}\le t) = 1-e^{-\mu_{(j)}(t-1)}, \; t \ge 1 .$$

\subsection{Problem Formulation}
%For the system model given above, we consider a problem of the optimal load allocation for the coded 
% matrix-vector multiplication over heterogeneous worker groups.
 For given an input data matrix with $k$ rows and $G$ groups of workers with the straggling parameter $\mu_{(j)}$ for workers in group $j$, 
 we are interested in obtaining the optimal load allocation $(l^\ast_{(1)}, l^\ast_{(2)}, \dots, l^\ast_{(G)})$ and designing the $(n,k)$ MDS codes to minimize the expected computation latency. 
 Due to the heterogeneity of the straggling parameters of the workers, the expected computation latency cannot be directly calculated from the known result of order statistics. We thus take a detour as described in this subsection and show the asymptotic optimality of our solution in Section \ref{Sec:Optimal_load_allocation}.
 
 Let $r_j$ denote the number of workers in group $j$ which finish the assigned subtasks when the master completes to receive the inner product of $k$ rows of $\tilde{\mtx{A}}$ with $\mathbf{x}$. Then, we denote the summation of $r_j$'s for all groups by $r$, i.e., $r = \sum_{j\in [G]} r_j$. 
For $N = \sum_{j\in [G]} N_j$, let $T_{r:N}$ denote the $r$-th order statistic of $N$ exponential random variables following the distribution given in \eqref{Eq:CDF_in_Group_j} for each of $N_j$ workers that belongs to group $j$.
We aim at finding the optimal load allocation to minimize the expected computation time $E[T_{r:N}]$ for all $r \in [N]$. Throughout the remainder of our paper, $E[T_{r:N}]$ is denoted by $\lambda_{r:N}$ for notational convenience.

In our modeling, group $j~\in[G]$ consists of $N_j$ workers with straggling parameter $\mu_{(j)}$ and shift parameter $\alpha_{(j)}$. 
Workers in group $j$ are assigned the coded data matrix with $l_{(j)}$ rows. We thus have 
\begin{equation}\label{Eq:Constraints_n}
n = \sum_{j\in [G]} N_j l_{(j)}  .
\end{equation}

In addition, we assume that $N_j = \Theta(N)$ for $j\in [G]$. Recall that an $(n,k)$ MDS code is applied to the rows of $\mtx{A}\in\mathbb{R}^{k \times d}$ for obtaining $\tilde{\mtx{A}} \in \mathbb{R}^{n \times d}$. This implies that the master needs to collect the inner product of the $k$ rows of $\tilde{\mtx{A}}$ with $\mathbf{x}$ to retrieve $\mtx{A}\mathbf{x}$. In this sense, the condition for guaranteeing the successful recovery of $\mtx{A}\mathbf{x}$ at the master is given as
\begin{equation} \label{Eq:Constraint}
\sum_{i \in B} l_i = k \hspace{0.5in} \textnormal{ for } B\in N^r . 
\end{equation}
Then, \eqref{Eq:Constraint} is rewritten as 
\begin{equation} \label{Eq:Sumk}
\sum_{j \in [G]} r_j l_{(j)} = k . %\textnormal{ where } \sum_{j\in [G]} r_j = r .
\end{equation}

% For given $\left(\mu_{(1)}, \mu_{(2)}, \dots, \mu_{(G)} \right)$, $\left(\alpha_{(1)}, \alpha_{(2)}, \dots, \alpha_{(G)} \right)$, $k$, and $\left(N_1, N_2, \dots, N_G\right)$, our main objective is finding the optimal load allocation $\left( l_{(1)}^*, l_{(2)}^*,\dots, l_{(N)}^* \right)$ to have the minimum $\lambda_{r:N}$ for all $r \in [N]$ with the constraint (\ref{Eq:Sumk}). The optimal load allocation $\left( l_{(1)}^*, l_{(2)}^*,\dots, l_{(N)}^* \right)$ leads us to choose the optimal $n^\ast$  from \eqref{Eq:Constraints_n}, which eventually means the optimal design of the $(n, k)$ MDS code.

For given $\left(\mu_{(1)}, \mu_{(2)}, \dots, \mu_{(G)} \right)$, $\left(\alpha_{(1)}, \alpha_{(2)}, \dots, \alpha_{(G)} \right)$, $k$, and $\left(N_1, N_2, \dots, N_G\right)$, our main objective is finding the optimal $(n^*,k)$ MDS code and load allocation $\left( l_{(1)}^*, l_{(2)}^*,\dots, l_{(N)}^* \right)$ to have the minimum $\lambda_{r:N}$ for all $r \in [N]$ with the constraint (\ref{Eq:Sumk}). Note that the optimal load allocation $\left( l_{(1)}^*, l_{(2)}^*,\dots, l_{(N)}^* \right)$ leads us to choose the optimal $n^\ast$  from \eqref{Eq:Constraints_n}, which eventually means the optimal design of the $(n, k)$ MDS code.

%\begin{remark}[Purely heterogeneous workers] Although our modeling assumes the group heterogeneity, the latency analysis can be extended to approximate the latency of the computing system with purely heterogeneous workers.
%Consider the case that $\mu_i$'s are distinct for all $i \in [N]$. 
%Then we can use our modeling according to the distribution of straggling parameters.
%For example, suppose that straggling parameters are distributed around two values ($\bar{\mu}_1$ and $\bar{\mu}_2$). In this case,  the system model can be applied by considering $\bar{\mu}_1$ and $\bar{\mu}_2$ as the representative values of the straggling parameters of two groups.
%As another example, assume that straggling parameters are uniformly distributed in a closed interval. Then we divide the closed interval into $G$ subintervals with equal length. The straggling parameters are partitioned into $G$ subintervals where they belong to. Simply setting $\mu_{(j)}$ as the arithmetic mean of the end points of the $j$-th subinterval for $j \in [G]$, we can apply the system model considered in this paper.
%\end{remark}

\section{Optimal Load Allocation} \label{Sec:Optimal_load_allocation}%the optimal $(n,k)$ MDS code and 

In this section, we provide the optimal load allocation method under the proposed system model. First, we find a condition for the optimal load allocation using a lower bound of $\lambda_{r:N}$. Next, we introduce the optimal load allocation method which achieves the minimum of the lower bound. Finally, we prove that for given the optimal load allocation, $\lambda_{r:N}$ is equal to the lower bound for sufficiently large $N$.

\subsection{Condition for Optimal Load Allocation} \label{Subsec:Add_condition}
Recall that $r_j$ denotes the number of workers in group $j$ which complete the assigned subtasks when the master receives the inner product of $k$ coded rows with $\mathbf{x}$ ($r = \sum_{j\in [G]} r_j$).
Let $T_{r_j:N_j}^{l_{(j)}}$ denote the $r_j$-th order statistic of $N_j$ random variables following the distribution in \eqref{Eq:CDF_in_Group_j}.
The expectation $E \left [T_{r_j:N_j}^{l_{(j)}} \right ]$ is rewritten as $\lambda_{r_j:N_j}^{l_{(j)}}$ for notational convenience.
Here, $\lambda_{r_j:N_j}^{l_{(j)}}$ is equivalent to the 
average runtime of a system using an $(N_j, r_j )$ MDS code.\footnote{Note that this does not mean that the group-wise small MDS codes are actually deployed.} 
At group $j$ we thus have
%(\cmt{adding details?? Appendix \ref{App:Group_Expectation}})
\begin{equation} \label{Eq:Group_expectation}
\lambda_{r_j:N_j}^{l_{(j)}} =  \frac{l_{(j)}}{k} \left( \alpha_{(j)} + \frac{1}{\mu_{(j)}} \log \left(\frac{N_j}{N_j - r_j} \right) \right) ,
\end{equation}
 for given $l_{(j)}$.
Here, we use an approximation of $\mathcal{H}_n - \mathcal{H}_{n-k} \approx \log (\frac{n}{n-k})$ where $\mathcal{H}_n = \sum_{i\in [n]} \frac{1}{i}$. We set aside the derivation of equation \eqref{Eq:Group_expectation}  in Appendix \ref{App:Group_Expectation}.

Recall that the overall latency is denoted by $T_{r:N}$.
Our goal is to obtain the optimal load allocation method to minimize $\lambda_{r:N}$. However, it is not easy to get a closed form for $\lambda_{r:N}$ for given $\boldsymbol{N} = ( N_1, N_2 \dots, N_G )$ and $\boldsymbol{\mu} = (\mu_{(1)}, \mu_{(2)}, \dots, \mu_{(G)})$.
Thus, we use the following approach to solve the problem.
Note that 
\begin{equation} \label{Eq:Order_max}
T_{r:N} = \max_{j\in [G]} \left\{ T_{r_j: N_j}^{l_{(j)}}\right\} .
\end{equation}
It follows from the definition of maximum that
$$
T_{r:N} = \max_{j\in [G]} \left\{ T_{r_j: N_j}^{l_{(j)}}\right\} \iff T_{r:N} \ge T_{r_j: N_j}^{l_{(j)}} \textnormal{ for all } j \in [G]  .
$$
Then, we have the lower bound for the expectation of $T_{r:N}$ as follows:
\begin{equation*}
\lambda_{r:N} = E \left[ \max_{j\in [G]} \left\{ T_{r_j: N_j}^{l_{(j)}}\right\} \right] \ge  \max_{j\in [G]} \left\{ \lambda_{r_j: N_j}^{l_{(j)}} \right\} . 
\end{equation*}

%\max_{j\in [G]} \left\{ E  \left[ T_{r_j: N_j}^{l_{(j)}}  \right] \right\} =
%Assume that there are only two groups and $n^*$ is the number of coded rows which have the minimum computation latency $\lambda_{r:N}$, where $n^* = l_{(1)} N_1 + l_{(2)} N_2$.
%We define  $\lambda_{r:N}^{\boldsymbol{l}} = \max_{j\in [G]} \left\{ \lambda_{r_j: N_j}^{l_{(j)}} \right\}$.
We first find the optimal load allocation to achieve the minimum of $\max_{j\in [G]} \left\{ \lambda_{r_j: N_j}^{l_{(j)}} \right\}$. Then, given the optimal load allocation, it will be shown in Section \ref{Subsec:Analysis_lambda} that $\lambda_{r:N}$ converges to $\max_{j\in [G]} \left\{ \lambda_{r_j: N_j}^{l_{(j)}} \right\}$ as $N$ goes to infinity.
%분석을 위해서 실수 도메인으로 본다...
%휴리스틱으로 봤을 때 어떤 우변을 최소화 하는 것은...각 컴퓨테이션 타임을 최소화 하는 것이다. 만약에 같이 않고 최소인 로드 얼로케이션이 있다고 가정하자.. 맥스에서 l을 줄이면 
%d여기서 이야기 할꺼는 로워바운드를 미니마이즈 하는 형태로 가는데 로워바운드가 고정되어있다고 했을 때 람다_{r:N}이 람다 r_j:N_j로 수렴한다라는 것을 이야기 한다. 
From now on, $l_{(j)}$ and $r_j$ are considered as real values to make the analysis easy for $j\in [G]$. Observe the following lemma.
\begin{lemma}\label{Lmm:Lower_conditionG=2}
Consider that there are two groups, i.e., $G = 2$. Let $\boldsymbol{l}^* = (l_{(1)}^*,l_{(2)}^*)$ be the optimal load allocation which achieves the minimum of $\max_{j\in \{1, 2\}} \left\{ \lambda_{r_j: N_j}^{l_{(j)}} \right\}$.
Let  $\boldsymbol{r}^* = (r_1^*, r_2^*)$ be the numbers of workers to complete the assigned task corresponding to $(l_{(1)}^*,l_{(2)}^*)$.
Then, $\lambda_{r_1^*: N_1}^{l_{(1)}^*} = \lambda_{r_2^*: N_2}^{l_{(2)}^*}.$
%\begin{equation*}
%\lambda_{r_1^*: N_1}^{l_{(1)}^*} = \lambda_{r_2^*: N_2}^{l_{(2)}^*}.
%\end{equation*}
\end{lemma}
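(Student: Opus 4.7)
The plan is a direct contradiction argument of the type familiar from minimax problems: if the two group-wise latencies are not equal at the optimum, one can shift a little load from the ``slow'' group to the ``fast'' group and strictly reduce the maximum. Specifically, assume toward contradiction that $\lambda_{r_1^*:N_1}^{l_{(1)}^*} \ne \lambda_{r_2^*:N_2}^{l_{(2)}^*}$; without loss of generality take the first term to be the strictly larger one. The goal is then to exhibit an admissible perturbation of $(l_{(1)}^*, l_{(2)}^*)$ that simultaneously decreases $\lambda_{r_1^*:N_1}^{l_{(1)}}$ and keeps $\lambda_{r_2^*:N_2}^{l_{(2)}}$ strictly below $\lambda_{r_1^*:N_1}^{l_{(1)}^*}$, contradicting the optimality of $\boldsymbol{l}^*$.

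The key structural observation is that, for fixed $r_j$ and $N_j$, expression \eqref{Eq:Group_expectation} is a strictly increasing linear function of $l_{(j)}$ with slope $\tfrac{1}{k}\bigl(\alpha_{(j)} + \tfrac{1}{\mu_{(j)}}\log(N_j/(N_j - r_j))\bigr) > 0$. I would therefore hold $(r_1^*, r_2^*)$ fixed and move along the one-parameter family
\[
l_{(1)}(\epsilon) = l_{(1)}^* - \epsilon\, r_2^*, \qquad l_{(2)}(\epsilon) = l_{(2)}^* + \epsilon\, r_1^*,
\]
which manifestly preserves the recovery constraint $r_1^* l_{(1)}(\epsilon) + r_2^* l_{(2)}(\epsilon) = k$ from \eqref{Eq:Sumk}. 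As $\epsilon$ grows from $0$, the larger quantity $\lambda_{r_1^*:N_1}^{l_{(1)}(\epsilon)}$ decreases linearly in $\epsilon$ while the smaller one $\lambda_{r_2^*:N_2}^{l_{(2)}(\epsilon)}$ increases linearly in $\epsilon$; both depend continuously on $\epsilon$. Because of the strict gap between the two terms at $\epsilon = 0$, one can pick $\epsilon > 0$ small enough that the affine functions of $\epsilon$ have not yet crossed, and then $\max\bigl\{\lambda_{r_1^*:N_1}^{l_{(1)}(\epsilon)}, \lambda_{r_2^*:N_2}^{l_{(2)}(\epsilon)}\bigr\} = \lambda_{r_1^*:N_1}^{l_{(1)}(\epsilon)} < \lambda_{r_1^*:N_1}^{l_{(1)}^*}$, the desired contradiction.

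The part I expect to require the most care is the boundary bookkeeping: one must verify that $r_1^*, r_2^* > 0$ and $l_{(1)}^* > 0$ so that the slope is strictly positive and the perturbation stays inside the real-valued relaxation announced just before the lemma. If $r_j^* = 0$ for some $j$, then group $j$ does not contribute to \eqref{Eq:Sumk} and the problem degenerates to a single-group instance; if $l_{(1)}^* = 0$, then $\lambda_{r_1^*:N_1}^{l_{(1)}^*} = 0$, which is incompatible with it being the strictly larger quantity. Once these edge cases are ruled out, admissibility of the perturbation for small enough $\epsilon > 0$ is immediate by continuity, and no genuine analytic obstacle remains; the real content of the lemma is simply the linearity of $\lambda_{r_j:N_j}^{l_{(j)}}$ in $l_{(j)}$ combined with the minimax balance principle.
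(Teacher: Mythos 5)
Your proposal is correct and is essentially the paper's own argument: the paper likewise fixes $\boldsymbol{r}^*$, perturbs the loads along the constraint $r_1^* l_{(1)} + r_2^* l_{(2)} = k$ by raising $l_{(2)}$ (there, by exactly enough to increase $\lambda_{r_2^*:N_2}^{l_{(2)}}$ by $\epsilon/2$) and correspondingly lowering $l_{(1)}$, then uses monotonicity of $\lambda_{r_j:N_j}^{l_{(j)}}$ in $l_{(j)}$ to conclude the maximum strictly decreases, a contradiction. Your explicit one-parameter family and the boundary bookkeeping on $r_j^*>0$, $l_{(1)}^*>0$ are just a cleaner parametrization of the same load-shifting step.
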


\begin{proof}
Suppose $\lambda_{r_1^*: N_1}^{l_{(1)}^*} \neq \lambda_{r_2^*: N_2}^{l_{(2)}^*}$. Without loss of generality, we may assume $\epsilon := \lambda_{r_1^*: N_1}^{l_{(1)}^*} - \lambda_{r_2^*: N_2}^{l_{(2)}^*}  >0$. It follows from the MDS property that the pair $(\boldsymbol{l}^*, \boldsymbol{r}^*)$ satisfies the constraint $r_1^* l_{(1)}^* + r_2^* l_{(2)}^* = k$ for the successful recovery of $\mtx{A} \mathbf{x}$.
For fixed $\boldsymbol{r}^*$, let $\bar{l}_{(2)}$ satisfy the following equation:
\begin{equation} \label{Eq:Group2_balancing}
\lambda_{r_2^*: N_2}^{\bar{l}_{(2)}} = \lambda_{r_2^*: N_2}^{l_{(2)}^*} + \frac{\epsilon}{2} .
\end{equation}
We denote $\bar{l}_{(1)} = \frac{k - r_2^* \bar{l}_{(2)}}{r_1^*}$.
Since $\bar{l}_{(2)} > l_{(2)}^*$, we have $ l_{(1)}^* > \bar{l}_{(1)}$.
This implies that $\lambda_{r_1^*: N_1}^{{l}_{(1)}^*} > \lambda_{r_1^*: N_1}^{\bar{l}_{(1)}}  .$
%\begin{equation*}
%\lambda_{r_1^*: N_1}^{{l}_{(1)}^*} > \lambda_{r_1^*: N_1}^{\bar{l}_{(1)}}  .
%\end{equation*}
 Then $\bar{l}_{(2)} > l_{(2)}^*$ and $\bar{l}_{(1)} < l_{(1)}^*$. 
It follows from (\ref{Eq:Group2_balancing}) that $\lambda_{r_1^*: N_1}^{l_{(1)}^*} > \lambda_{r_2^*: N_2}^{\bar{l}_{(2)}}  .$
%\begin{equation*}
%\lambda_{r_1^*: N_1}^{l_{(1)}^*} > \lambda_{r_2^*: N_2}^{\bar{l}_{(2)}}  .
%\end{equation*}
Thus, we have $\max_{j\in \{1, 2\}} \left\{ \lambda_{r_j^*: N_j}^{l_{(j)}^*} \right\} > \max_{j\in \{1, 2\}} \left\{ \lambda_{r_j^*: N_j}^{\bar{l}_{(j)}} \right\} ,$
%\begin{equation*}
%\max_{j\in \{1, 2\}} \left\{ \lambda_{r_j^*: N_j}^{l_{(j)}^*} \right\} > \max_{j\in \{1, 2\}} \left\{ \lambda_{r_j^*: N_j}^{\bar{l}_{(j)}} \right\} ,
%\end{equation*}
which is a contradiction to the the assumption that $(\boldsymbol{l}^*, \boldsymbol{r}^*)$ achieves the minimum of $\max_{j\in \{1, 2\}} \left\{ \lambda_{r_j: N_j}^{l_{(j)}} \right\}.$
\end{proof}

 Lemma \ref{Lmm:Lower_conditionG=2} means that the expected latency can be minimized through balancing the task by shifting the workload from a group with more workload to a group with less workload depending on the straggling parameter and the number of workers in each group.  
Next, we introduce Theorem \ref{Thm:Lower_condition}, a generalization of Lemma \ref{Lmm:Lower_conditionG=2}, indicating the minimality condition for $ \max_{j\in [G]} \left\{ \lambda_{r_j: N_j}^{l_{(j)}} \right\}$ $(G \ge 2 )$ given the optimal load allocation. Applying the same argument used in Lemma \ref{Lmm:Lower_conditionG=2}, Theorem \ref{Thm:Lower_condition} is verified.

\begin{theorem}\label{Thm:Lower_condition}
For given $G \ge 2$, let $\boldsymbol{l}^* = (l_{(1)}^*, l_{(2)}^*, \dots, l_{(G)}^* )$ be the optimal load allocation which achieves the minimum of $ \max_{j\in [G]} \left\{ \lambda_{r_j: N_j}^{l_{(j)}} \right\}$. Let $\boldsymbol{r}^* = (r_1^*, r_2^*, \dots, r_G^* )$ be the numbers of workers to complete the assigned task corresponding to $\boldsymbol{l}^*$. Then $\lambda_{r_j^*: N_j}^{l_{(j)}^*} = \lambda_{r_{j'}^*: N_{j'}}^{l_{(j')}^*}  ,$
%\begin{equation*} %\label{Eq:AddedConstraint}
%\lambda_{r_j^*: N_j}^{l_{(j)}^*} = \lambda_{r_{j'}^*: N_{j'}}^{l_{(j')}^*}  ,
%\end{equation*}
 for all $j \neq j' \in [G]$.
\end{theorem}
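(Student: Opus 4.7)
The plan is to argue by contradiction along exactly the lines of Lemma \ref{Lmm:Lower_conditionG=2}, but handle the extra care required when the maximum is attained by several groups at once. Suppose, for contradiction, that at the optimum some two groups $j_1 \neq j_2$ satisfy $\lambda_{r_{j_1}^*:N_{j_1}}^{l_{(j_1)}^*} \neq \lambda_{r_{j_2}^*:N_{j_2}}^{l_{(j_2)}^*}$. Partition $[G]$ into the maximizing set $S_{\max} = \{j \in [G] : \lambda_{r_j^*:N_j}^{l_{(j)}^*} = \max_{i\in[G]} \lambda_{r_i^*:N_i}^{l_{(i)}^*}\}$ and its complement $S_{<}$. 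By assumption $S_{<}$ is nonempty, and of course $S_{\max}$ is also nonempty.

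Next I would perform a simultaneous shift of load away from every group in $S_{\max}$ into a single chosen group $j' \in S_{<}$, keeping $\boldsymbol{r}^*$ fixed throughout. Concretely, for a small parameter $c>0$ define the perturbed allocation $\bar{l}_{(j)} = l_{(j)}^* - c$ for $j \in S_{\max}$, $\bar{l}_{(j')} = l_{(j')}^* + \Delta$ with $\Delta = c\,\bigl(\sum_{j\in S_{\max}} r_j^*\bigr)/r_{j'}^*$, and $\bar{l}_{(i)} = l_{(i)}^*$ otherwise. Plugging this into \eqref{Eq:Sumk} confirms that the MDS recovery constraint $\sum_{j\in[G]} r_j^* \bar{l}_{(j)} = k$ is preserved. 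From \eqref{Eq:Group_expectation}, each $\lambda_{r_j:N_j}^{l_{(j)}}$ is a continuous, strictly increasing, linear function of $l_{(j)}$, so the new $\lambda$ values of the groups in $S_{\max}$ strictly decrease, while the new $\lambda$ value at $j'$ is only perturbed upward by an amount $\Theta(c)$ from its original value, which was strictly below the maximum.

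By continuity, I can therefore choose $c>0$ small enough that $\lambda_{r_{j'}^*:N_{j'}}^{\bar{l}_{(j')}}$ still lies strictly below the old maximum, that every $\lambda_{r_j^*:N_j}^{\bar{l}_{(j)}}$ for $j \in S_{\max}$ lies strictly below the old maximum, and that the untouched groups in $S_{<}\setminus\{j'\}$ already had $\lambda$ values strictly below the old maximum. Consequently $\max_{j\in[G]} \lambda_{r_j^*:N_j}^{\bar{l}_{(j)}} < \max_{j\in[G]} \lambda_{r_j^*:N_j}^{l_{(j)}^*}$, contradicting the optimality of $\boldsymbol{l}^*$. This forces all group expectations to coincide.

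The only real subtlety, and the main place I expect care to be needed, is the multi-maximizer case: a naive pairwise application of Lemma \ref{Lmm:Lower_conditionG=2} between one member of $S_{\max}$ and $j'$ would leave the remaining members of $S_{\max}$ at the old maximum, so the max would not actually decrease. Handling this by perturbing all of $S_{\max}$ simultaneously, and then invoking continuity of $\lambda_{r_j:N_j}^{l_{(j)}}$ in $l_{(j)}$ via \eqref{Eq:Group_expectation} to justify the smallness of $c$, is the crux of the generalization from $G=2$ to arbitrary $G\ge 2$.
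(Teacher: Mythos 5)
Your proof is correct and follows essentially the same route as the paper's: a contradiction argument that, holding $\boldsymbol{r}^*$ fixed, shifts load away from \emph{every} group attaining the maximum into a strictly sub-maximal group while preserving the constraint $\sum_{j} r_j^* l_{(j)} = k$, so that the maximum strictly decreases. The only difference is cosmetic: the paper performs the rebalancing as an $E$-step sequential procedure with increments of $\epsilon/(2E)$ at the receiving group, whereas you do it in one simultaneous perturbation of size $c$ and invoke continuity/linearity of $\lambda_{r_j:N_j}^{l_{(j)}}$ in $l_{(j)}$ to choose $c$ small enough; both correctly handle the multi-maximizer subtlety you identify.
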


\begin{proof}
The proof is in Appendix \ref{APP:Theorem1}.
\end{proof}

% 관찰하면 두개의 평균이 같아지는 경우는 pdf가 같아지는 조건이다. 따라서 평균과 분산이 같아질때를 조건식으로 넣어서 문제 풀이에 추가하였다.
% 두개의 그룹만 있다고 가정해보자... Assume that  $n^*$ is the number of coded rows which have the 
We define
\begin{equation} \label{Eq:xi}
\xi \left(r_j,N_j,\mu_{(j)} \right) =  \alpha_{(j)} + \frac{1}{\mu_{(j)}} \log \left(\frac{N_{j}}{N_{j} - r_{j}} \right) 
\end{equation}
for notational convenience.
From Theorem \ref{Thm:Lower_condition}, the optimal load allocation $(\boldsymbol{l}^*, \boldsymbol{r}^*)$ satisfies the following equations:
\begin{equation} \label{Eq:Mean_equality_condition}
l_{(j)} \xi \left( r_j, N_j, \mu_{(j)} \right)  = l_{(j')} \xi \left( r_{j'}, N_{j'}, \mu_{(j')} \right)   , \hspace{0.5in} \textnormal{ for } j \neq j' \in [G]
%\textnormal{ where } \xi(r_j,N_j,\mu_{(j)}) =  1 + \frac{1}{\mu_{(j)}} \log \left(\frac{N_{j}}{N_{j} - r_{j}} \right).
\end{equation}
%for $j \neq j' \in [G]$.
%Thus we exploit an asymptotic approach in \cite{Kim19Coded} to get the $\lambda_{r:N}$. In our setting, we reproduce the following theorem.

% 옵티말 로드 얼로케이션은 (7)번식을 만족할 뿐만 아니라 (12)번 식도 만족해야 한다. 섹션 3 A에서 정의된 프라블럼 formualtion에  더해서

\subsection{Determining Optimal Load Allocation and $(n,k)$ MDS Code for Achieving Lower Bound of $\lambda_{r:N}$}  \label{Subsec:Loadallocation}
%As in the Section \ref{Subsec:G=2}, 
We assume that $k$ is given. Note that we have the constraints (\ref{Eq:Sumk}) and (\ref{Eq:Mean_equality_condition}). 
%\begin{align}\label{Eq:Sumr}
%\sum_{j\in [G]} r_j = r ,\\ \label{Eq:Sumk}
 %\sum_{j \in [G]} r_j l_{(j)} = k.
%\end{align}
%\cmtb{$\boldsymbol{l}^* = (l_{(1)}^*, l_{(2)}^*, \dots, l_{(G)}^*)$ and $\boldsymbol{r} = (r_1^*, r_2^*, \dots, r_G^*)$}
In this subsection, we provide the optimal load allocation ($\boldsymbol{l}^*, \boldsymbol{r}^* $)  which achieves the minimum of  $\max_{j\in [G]} \left\{ \lambda_{r_j: N_j}^{l_{(j)}} \right\}$. 
In addition, for given $k$, we determine the $(n,k)$ MDS code to achieve the minimum of  $$\max_{j\in [G]} \left\{ \lambda_{r_j: N_j}^{l_{(j)}} \right\}.$$ %\cmtb{the expected latency.}
%We denote
%\begin{equation}
%\xi(r_j,N_j,\mu_{(j)}) =  1 + \frac{1}{\mu_{(j)}} \log \left(\frac{N_{j}}{N_{j} - r_{j}} \right) ,
%\end{equation} 
%and $\xi_2(r_1) = 1 + \mu_{(2)} \log \left(\frac{N_{2}}{N_{2} -r + r_{1}} \right)$. Without loss of generality, we assume that $\mu_{(1)} > \mu_{(2)}$.
% 실수로 취급하고 결정한다는 것에 대한 $$
We define a function 
\begin{equation}\label{Eq:Objective_G}
f ( \boldsymbol{r}) =  \frac{1}{ \sum_{j\in [G]} \frac{r_j}{\xi \left(r_j, N_j, \mu_{(j)} \right) }}  ,
\end{equation}
where $\boldsymbol{r} = (r_1, r_2, \dots, r_G) $.
\begin{lemma} \label{Lmm:Convexity_f(r)}
The function $f ( \boldsymbol{r} )$ in (\ref{Eq:Objective_G}) is a strictly convex function on an open set $S$, where $S$ is a Cartersian product of open intervals $(0, N_j)$, for $j \in [G]$, i.e., 
\begin{equation} \label{Eq:CartesianS}
S = \prod_{j\in [G]} (0, N_j)  .
\end{equation}
\end{lemma}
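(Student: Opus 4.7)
The plan is to reduce the convexity of $f$ to the concavity of its reciprocal. Write $g(\boldsymbol{r}) := 1/f(\boldsymbol{r}) = \sum_{j\in[G]} h_j(r_j)$, where $h_j(r) := r/\xi(r, N_j, \mu_{(j)})$. This $g$ is a separable sum of single-variable functions, and each summand is positive on $(0, N_j)$ since $\xi(r, N_j, \mu_{(j)}) = \alpha_{(j)} + \mu_{(j)}^{-1}\log(N_j/(N_j - r))$ is positive (using $\alpha_{(j)} > 0$). In particular, $g > 0$ on $S$.

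The technical core is to prove that each $h_j$ is strictly concave on $(0, N_j)$. Abbreviating $\xi(r) := \xi(r, N_j, \mu_{(j)})$, $\alpha := \alpha_{(j)}$, $\mu := \mu_{(j)}$, and $N := N_j$, one computes $\xi'(r) = [\mu(N-r)]^{-1}$, $\xi''(r) = [\mu(N-r)^2]^{-1}$, and
\[
h_j''(r) \;=\; \frac{2r(\xi'(r))^2 - \xi(r)\bigl(r\,\xi''(r) + 2\xi'(r)\bigr)}{\xi(r)^3}.
\]
Substituting and clearing the positive denominators, the inequality $h_j''(r) < 0$ reduces to $\mu\,\xi(r)(2N-r) > 2r$. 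Since $\mu\alpha(2N-r) > 0$ already contributes positive slack, it suffices to show $(2N-r)\log(N/(N-r)) \ge 2r$ for $r \in [0, N)$. I plan to set $\phi(r) := (2N-r)\log(N/(N-r)) - 2r$, verify $\phi(0)=0$, and show $\phi'(r) = r/(N-r) - \log(N/(N-r)) \ge 0$ by the elementary inequality $\log(1+s) \le s$ with $s = r/(N-r)$.

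Because $g$ is a sum of strictly concave functions in distinct variables, its Hessian is diagonal with strictly negative entries; hence $g$ is strictly concave on $S$. To pass from $g$ to $f = 1/g$, I would apply the Hessian identity
\[
\nabla^2\!\left(\frac{1}{g}\right) \;=\; -\,\frac{\nabla^2 g}{g^2} \;+\; \frac{2\,\nabla g\,(\nabla g)^{\top}}{g^3}.
\]
On $S$, the first summand is positive definite (as $\nabla^2 g$ is negative definite and $g > 0$), and the second is positive semidefinite (a scaled rank-one outer product). Their sum is therefore positive definite, so $f$ is strictly convex on $S$.

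The main obstacle will be the scalar inequality $(2N-r)\log(N/(N-r)) \ge 2r$ used to pin down the sign of $h_j''$. Because the bound is tight at $r = 0$, a direct algebraic comparison is not enough; the cleanest route is the monotonicity argument via $\log(1+s) \le s$ outlined above. Once this inequality is secured, the rest of the proof amounts to standard convex-analytic manipulations.
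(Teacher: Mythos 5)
Your proposal is correct and follows essentially the same route as the paper: reduce to showing $g = 1/f$ is positive and strictly concave, split $g$ into single-variable summands, and establish concavity of each via the second derivative, which reduces to the same scalar inequality $(2N-r)\log\left(\frac{N}{N-r}\right) \ge 2r$ proved by the same monotonicity argument ($\phi(0)=0$, $\phi'\ge 0$ via $\log(1+s)\le s$). The only difference is that you make the final passage from concave positive $g$ to strictly convex $1/g$ explicit via the Hessian identity, a step the paper leaves implicit.
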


\begin{proof}
Clearly, $S$ is a convex set.
Let $g ( \boldsymbol{r}) = \sum_{j\in [G]} \frac{r_j}{\xi \left(r_j, N_j, \mu_{(j)} \right) }  .$
%\begin{equation*}
%g ( \boldsymbol{r}) = \sum_{j\in [G]} \frac{r_j}{\xi \left(r_j, N_j, \mu_{(j)} \right) }  .
%\end{equation*}
Then, it suffices to show that $ g ( \boldsymbol{r} ) $ is a strictly concave and positive function.
Since $\frac{N_j}{N_j - r_j} > 1$ for $j \in [G]$, it is clear that $ g (\boldsymbol{r} ) $ is positive on $S$. 
Let $g ( \boldsymbol{r} ) = \sum_{j \in [G]} g_j(\boldsymbol{r} )$, 
where  $$g_j(\boldsymbol{r} ) = \frac{r_j}{\alpha_{(j)} + \frac{1}{\mu_{(j)}} \log \left(\frac{N_j}{N_j - r_j} \right)}  .$$
%\begin{equation*}
%g_j(\boldsymbol{r} ) = \frac{r_j}{a_{(j)} + \frac{1}{\mu_{(j)}} \log \left(\frac{N_j}{N_j - r_j} \right)}  .
%\end{equation*}
One can show that $g_j(\boldsymbol{r})$ is concave. We set aside the proof of the concavity of $g_j(\boldsymbol{r})$ in Appendix \ref{App:g_j}.
Since the sum of concave functions is a concave function, we have that $g( \boldsymbol{r})$ is a strictly concave function on $S$ as desired.
\end{proof}

%\begin{figure*}[t]
%\centering
%\begin{multicols}{2}
%     \includegraphics[width=0.8\linewidth]{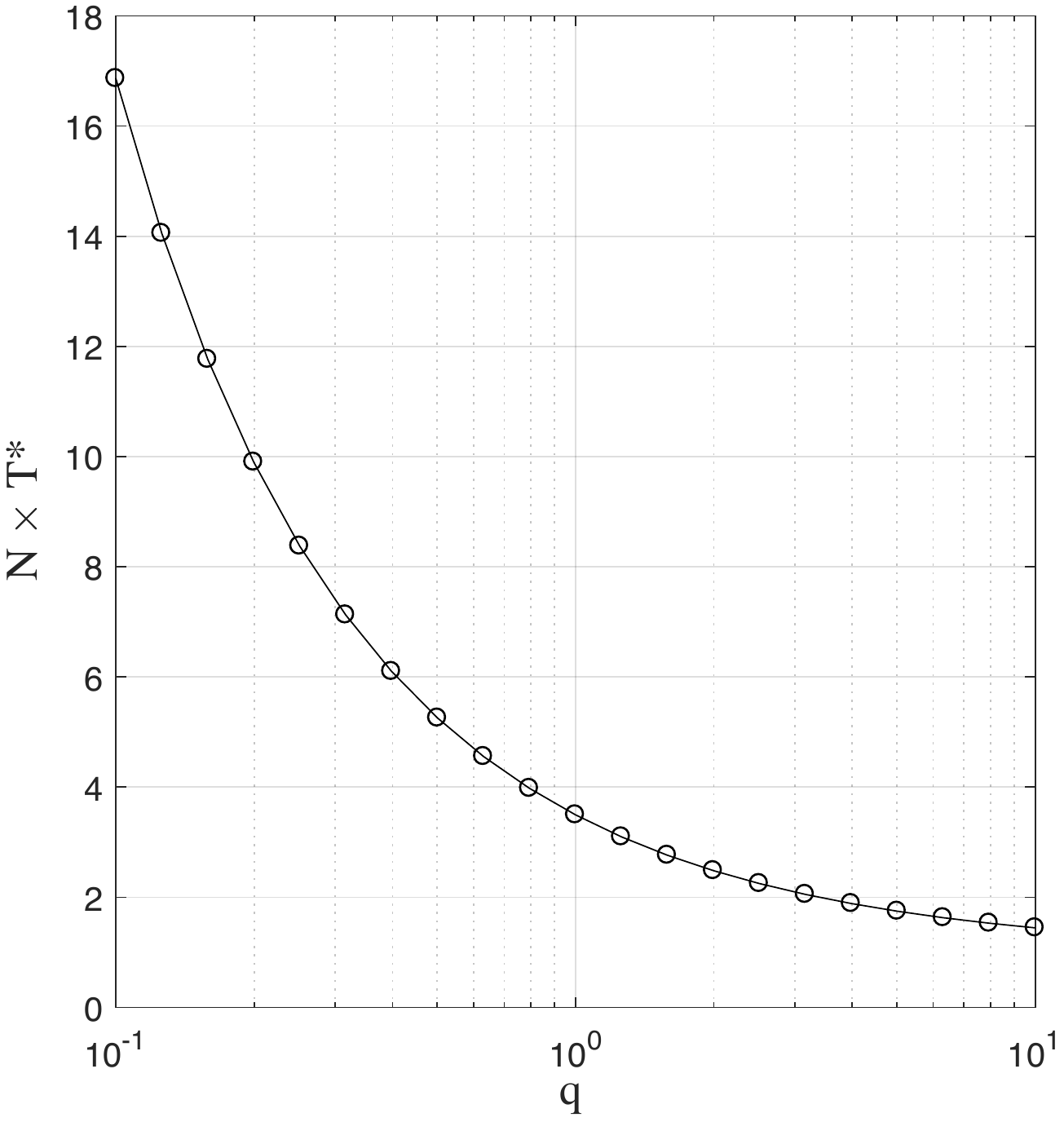}\par \caption{$N \times T^{\star}$ as a function of $q \boldsymbol{\mu}$, where $q$ is the scale of $\boldsymbol{\mu}$.}
%	\label{Fig:Tstar_vs_mu}
%    \includegraphics[width=0.82\linewidth]{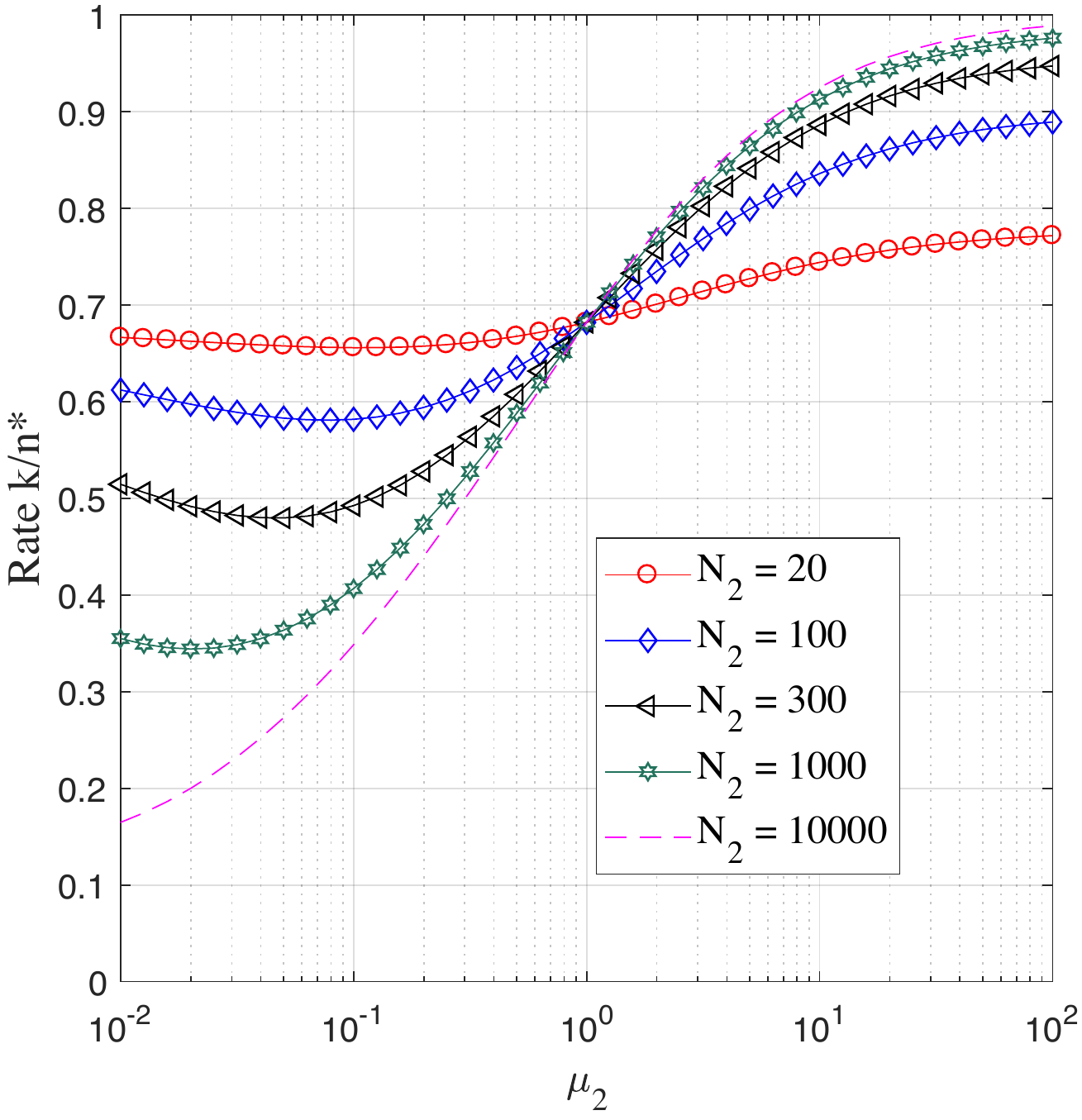}\par \caption{Rates of MDS code with the proposed load allocation for fixed $(N_1 = 100, \mu_1 = 1)$ and various $(N_2, \mu_2)$.}
%	\label{Fig:Rate_fixed_N1}
%\end{multicols}
%\end{figure*}

\begin{lemma} \label{Lmm:r_j^*proof}
Let $r_j^*$ be the solution of $\frac{\partial f}{\partial r_j}( \boldsymbol{r})  = 0$, for $j \in [G]$. Then
\begin{equation} \label{Eq:Solutionr_j^*_lmm}
r_j^* = N_j \left( 1 + \frac{1}{W_{-1} \left( - e^{-\left(\alpha_{(j)}\mu_{(j)} + 1 \right)} \right) } \right) ,
\end{equation}
where $W_{-1}(x)$\footnote{$W_{-1}(x)$ denotes the branch satisfying $W(x)e^{W(x)} = x$ and $W(x)\le -1$.} is the lower branch of the Lambert $W$ function.
\end{lemma}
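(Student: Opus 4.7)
The plan is to exploit the additive structure of $g(\boldsymbol{r}) = \sum_{j \in [G]} g_j(r_j)$, with $g_j(r_j) = r_j/\xi(r_j, N_j, \mu_{(j)})$ as introduced in the proof of Lemma \ref{Lmm:Convexity_f(r)}, that underlies $f = 1/g$. Since each $g_j$ depends only on its own variable $r_j$, the gradient system decouples: from $\partial f/\partial r_j = -g_j'(r_j)/g(\boldsymbol{r})^2$ and $g > 0$ on $S$, the critical point condition reduces to the $G$ independent one-variable equations $g_j'(r_j) = 0$.

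I would next compute $g_j'(r_j)$ via the quotient rule, using $\partial \xi/\partial r_j = 1/[\mu_{(j)}(N_j - r_j)]$. Setting the numerator to zero and clearing the factor $\mu_{(j)}$ yields
\begin{equation*}
\alpha_{(j)}\mu_{(j)} + \log\frac{N_j}{N_j - r_j} = \frac{r_j}{N_j - r_j}.
\end{equation*}
The substitution $w := N_j/(N_j - r_j)$ turns the right-hand side into $w - 1$ and the log argument into $w$, giving the clean transcendental form
\begin{equation*}
w - \log w = \alpha_{(j)}\mu_{(j)} + 1.
\end{equation*}

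Then I would recognize this as a Lambert-W equation by substituting the ansatz $W(z) = -w$ into the defining identity $W(z)e^{W(z)} = z$: this gives $-w e^{-w} = z$, equivalent to $w - \log w = -\log(-z)$. Matching our equation forces $z = -e^{-(\alpha_{(j)}\mu_{(j)} + 1)}$ and hence $w = -W(-e^{-(\alpha_{(j)}\mu_{(j)} + 1)})$. Unwinding through $N_j - r_j = -N_j/W(-e^{-(\alpha_{(j)}\mu_{(j)} + 1)})$ and rearranging produces the stated formula, up to selection of the branch of $W$.

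The main obstacle will be the branch selection, since the argument $-e^{-(\alpha_{(j)}\mu_{(j)} + 1)}$ lies in $(-1/e, 0)$, where both real branches $W_0$ and $W_{-1}$ are defined. I would resolve this by observing that the admissible range $r_j \in (0, N_j)$ forces $w > 1$, so $W(z) = -w < -1$, which uniquely singles out the lower branch $W_{-1}$ and yields the claimed closed form. Although not required by the statement, uniqueness of this critical point and its identification with the global minimizer of $f$ on $S$ then follows immediately from the strict convexity established in Lemma \ref{Lmm:Convexity_f(r)}.
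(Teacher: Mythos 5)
Your proposal is correct and follows essentially the same route as the paper's proof: reduce $\partial f/\partial r_j = 0$ to the decoupled condition $\xi - r_j\,\partial\xi/\partial r_j = 0$, substitute $w = N_j/(N_j - r_j)$ to obtain $w - \log w = \alpha_{(j)}\mu_{(j)} + 1$, and invert via the Lambert $W$ function, selecting the branch $W_{-1}$ because $w > 1$ forces $-w < -1$. Your treatment is slightly more explicit than the paper's about why the critical-point system decouples and about the branch selection, but the argument is the same.
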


\begin{proof}
For $j\in [G]$, $\frac{ \partial f  }{\partial r_j}( \boldsymbol{r}^* ) = 0 $ reduces to
\begin{equation*} 
r_j^*  \frac{\partial \xi}{\partial r_j}(r_j^*, N_j, \mu_{(j)}) - \xi ( r_j^*, N_j, \mu_{(j)}) = 0  .
\end{equation*}
This equation is rephrased as
\begin{equation*} 
\frac{r_j^*}{\mu_{(j)} (N_j - r_j^*)} = \alpha_{(j)} + \frac{1}{\mu_{(j)}} \log \left( \frac{N_j}{N_j - r_j^*} \right)  .
\end{equation*}
Let $z = \frac{N_j}{N_j - r_j^*} > 1$. Then this equation is represented as
\begin{equation*} 
-e^{-(\alpha_{(j)} \mu_{(j)}+1)} = -z e^{-z} = W_{-1}^{-1}(-z) ,
\end{equation*}
where $W_{-1}^{-1}$ denotes the inverse function of $W_{-1}$.
It follows that 
\begin{equation} \label{Eq:Lambert4}
-z = W_{-1} (-e^{-(\alpha_{(j)} \mu_{(j)}+1)})  .
\end{equation}
By solving (\ref{Eq:Lambert4}) with respect to $r_j^*$, we get the result in (\ref{Eq:Solutionr_j^*_lmm}).
\end{proof}

The following theorem provides the optimal load allocation and the optimal $(n,k)$ MDS code to achieve the minimum of $\max_{j\in [G]} \left\{ \lambda_{r_j: N_j}^{l_{(j)}} \right\}$.

\begin{theorem} \label{Thm:Main}
The optimal load allocation $(\boldsymbol{l}^*, \boldsymbol{r}^*)$ to achieve the minimum of $\max_{j\in [G]} \left\{ \lambda_{r_j: N_j}^{l_{(j)}} \right\}$, denoted by $T^{\star}$, is determined as follows:

%$\arg \min_{r_1 \in A} $ $\frac{d}{dr_1} f(r_1) = 0$
\begin{align} \label{Eq:Solutionr_j^*}
&r_j^* = N_j \left( 1 + \frac{1}{W_{-1} \left( - e^{-\left(\alpha_{(j)}\mu_{(j)} + 1 \right)} \right) } \right)
\end{align}
and
\begin{align} \label{Eq:Solutionl_j^*} 
l_{(j)}^* = \frac{k} {r_j^* + \sum_{j' \neq j}  r_{j'}^* \frac{ \xi ( r_j^*, N_j, \mu_{(j)} ) }{\xi \left( r_{j'}^*, N_{j'}, \mu_{(j')} \right) } } , 
%&\cmtb{ l_{(j)}^* = \frac{k} { N_j \left( 1 + \frac{1}{W_{-1} \left( - e^{-\left(\mu_{(j)} + 1 \right)} \right) } \right) - \xi(r_j^*) \sum_{j' \neq j} \frac{\mu_{(j')}N_{j'} }{W_{-1} (- e^{-(\mu_{(j')} + 1 )})} }  ,}
%&l_{(j)}^* = \frac{k  \xi(r_{j'}^*, N_{j'}, \mu_{(j')} ) } {r_j^* \xi(r_{j'}^*, N_{j'}, \mu_{(j')} ) + r_{j'}^* \xi(r_j^*, N_j, \mu_{(j)} )} ,
\end{align}
for $j \in [G]$, where
\begin{align} \nonumber
& \xi(r_j^*,N_j,\mu_{(j)}) = \alpha_{(j)} + \frac{1}{\mu_{(j)}} \log \left( - W_{-1} \left( -e^{-(\alpha_{(j)} \mu_{(j)} + 1)} \right) \right) \\ \label{Eq:r_jstarxi}
\textnormal{ and } \hspace{0.5in}& \frac{r_j^*}{\xi \left(r_j^*, N_j, \mu_{(j)} \right) } = - \frac{\mu_{(j)} N_j }{W_{-1} (- e^{-(\alpha_{(j)}\mu_{(j)} + 1 )})}  .
\end{align}

%\begin{equation}\label{Eq:r_jstarxi}
%\textnormal{ and } \hspace{0.5in}\frac{r_j^*}{\xi \left(r_j^*, N_j, \mu_{(j)} \right) } = - \frac{\mu_{(j)} N_j }{W_{-1} (- e^{-(a_{(j)}\mu_{(j)} + 1 )})}  . 
%\end{equation}
Furthermore, for given $k$, we have the optimal $(n^*, k)$ MDS code, where $n^* = \sum_{j\in [G]} N_j l_{(j)}^*  .$
%\begin{equation*}
%n^* = \sum_{j\in [G]} N_j l_{(j)}^*  .
%\end{equation*}
Then the minimum expected latency, $T^{\star}$, is represented as
\begin{equation} \label{Eq:Optimal_latency_lower}
- \frac{1}{ \sum_{j\in [G]} \frac{\mu_{(j)} N_j }{W_{-1} \left( - e^{- \left( \alpha_{(j)}\mu_{(j)} + 1 \right) } \right) } }  .
% \frac{ \xi(r_1^*, N_1, \mu_{(1)}) \xi(r_2^*, N_2, \mu_{(2)}) }{ r_1^* \xi(r_2^*, N_2, \mu_{(2)}) +  r_2^* \xi(r_1^*, N_1, \mu_{(1)} ) } ,
\end{equation}
%where $W_{-1}(x)$ \footnote{$W_{-1}(x)$ denotes the branch satisfying $W(x)e^{W(x)} = x$ where $W(x)\le -1$.} is the lower branch of the Lambert $W$ function.
\end{theorem}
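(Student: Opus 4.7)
The plan is to use Theorem \ref{Thm:Lower_condition} to collapse the joint optimization over $(\boldsymbol{l}, \boldsymbol{r})$ into a single-variable convex optimization over $\boldsymbol{r}$, and then to invoke Lemmas \ref{Lmm:Convexity_f(r)} and \ref{Lmm:r_j^*proof}. At the optimum, Theorem \ref{Thm:Lower_condition} forces $\lambda_{r_j^*:N_j}^{l_{(j)}^*}$ to take a common value $T^{\star}$ across all $j\in[G]$. Using the closed form \eqref{Eq:Group_expectation}, this yields
\begin{equation*}
T^{\star} = \frac{l_{(j)}^*}{k}\,\xi\bigl(r_j^*,N_j,\mu_{(j)}\bigr), \qquad \text{so}\qquad l_{(j)}^* = \frac{k\,T^{\star}}{\xi\bigl(r_j^*,N_j,\mu_{(j)}\bigr)}.
\end{equation*}
Substituting this into the recovery constraint \eqref{Eq:Sumk}, $\sum_{j\in[G]} r_j^* l_{(j)}^* = k$, I would cancel $k$ on both sides to arrive at $T^{\star} = f(\boldsymbol{r}^*)$, with $f$ the objective defined in \eqref{Eq:Objective_G}. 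Hence minimizing $\max_{j\in[G]} \lambda_{r_j:N_j}^{l_{(j)}}$ subject to \eqref{Eq:Sumk} is equivalent to minimizing $f(\boldsymbol{r})$ over the open box $S$ defined in \eqref{Eq:CartesianS}.

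Next, Lemma \ref{Lmm:Convexity_f(r)} guarantees that $f$ is strictly convex on $S$, so any interior stationary point is the unique global minimizer. By Lemma \ref{Lmm:r_j^*proof}, the unique solution to $\nabla f(\boldsymbol{r}) = \boldsymbol{0}$ is given component-wise by \eqref{Eq:Solutionr_j^*}. I would then verify the interior condition $r_j^* \in (0, N_j)$: since $-e^{-(\alpha_{(j)}\mu_{(j)}+1)} \in (-1/e, 0)$, the lower branch satisfies $W_{-1}\bigl(-e^{-(\alpha_{(j)}\mu_{(j)}+1)}\bigr) < -1$, so $1/W_{-1}(\cdot) \in (-1, 0)$ and $r_j^* \in (0, N_j)$. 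With $\boldsymbol{r}^*$ pinned down, the equality-of-group-latencies identity \eqref{Eq:Mean_equality_condition} expresses every $l_{(j')}^*$ in terms of $l_{(j)}^*$, and plugging this into \eqref{Eq:Sumk} produces the linear relation whose unique solution is the formula \eqref{Eq:Solutionl_j^*}. The optimal code length $n^* = \sum_{j\in[G]} N_j l_{(j)}^*$ then follows immediately from \eqref{Eq:Constraints_n}.

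It remains to rewrite $T^{\star} = f(\boldsymbol{r}^*)$ in the explicit form \eqref{Eq:Optimal_latency_lower}. Inserting $N_j/(N_j - r_j^*) = -W_{-1}\bigl(-e^{-(\alpha_{(j)}\mu_{(j)}+1)}\bigr)$ into the definition \eqref{Eq:xi} of $\xi$ produces the first identity in \eqref{Eq:r_jstarxi}. For the second identity, I would return to the first-order condition derived in the proof of Lemma \ref{Lmm:r_j^*proof}, which reads $r_j^*\,\partial_{r_j}\xi(r_j^*,N_j,\mu_{(j)}) = \xi(r_j^*,N_j,\mu_{(j)})$, equivalently $r_j^*/\xi(r_j^*,N_j,\mu_{(j)}) = \mu_{(j)}(N_j - r_j^*)$; combining with $N_j - r_j^* = -N_j/W_{-1}\bigl(-e^{-(\alpha_{(j)}\mu_{(j)}+1)}\bigr)$ gives the second identity in \eqref{Eq:r_jstarxi}. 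Summing over $j$ and taking the reciprocal yields \eqref{Eq:Optimal_latency_lower}.

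The main obstacle is the reduction performed in the first paragraph: one must recognize that Theorem \ref{Thm:Lower_condition} turns the min-max over $(\boldsymbol{l},\boldsymbol{r})$ into an unconstrained minimization of $f$ in $\boldsymbol{r}$ alone, because only then does the strict convexity from Lemma \ref{Lmm:Convexity_f(r)} give uniqueness and the Lambert-$W$ stationary point from Lemma \ref{Lmm:r_j^*proof} supply the closed form. Once this collapse is in place, the remaining derivations are algebraic manipulations together with the interior-point check that legitimizes applying Lemma \ref{Lmm:Convexity_f(r)}.
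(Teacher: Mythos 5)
Your proposal is correct and follows essentially the same route as the paper: use Theorem \ref{Thm:Lower_condition} together with the constraint \eqref{Eq:Sumk} to collapse the problem to minimizing $f(\boldsymbol{r})$ on $S$, then invoke Lemma \ref{Lmm:Convexity_f(r)} for uniqueness and Lemma \ref{Lmm:r_j^*proof} for the Lambert-$W$ stationary point, and back-substitute to obtain $l_{(j)}^*$, $n^*$, and $T^{\star}$. The only cosmetic differences are that you verify $r_j^*\in(0,N_j)$ explicitly (a check the paper omits) and derive the second identity in \eqref{Eq:r_jstarxi} from the first-order condition rather than from the identity $\log(-W_{-1}(z))+W_{-1}(z)=\log(-z)$; both are valid.
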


\begin{figure*}[t]
\centering
\begin{multicols}{2}
\includegraphics[width=0.97\linewidth]{Figures/S1_Lower_bound_times_N} \par\caption{$N \times T^{\star}$ as a function of $q \boldsymbol{\mu}$, where $q$ is the scale of $\boldsymbol{\mu}$.}
	\label{Fig:Tstar_vs_mu}
\includegraphics[width=1\linewidth]{Figures/S4_rate_vs_various_setting_5cases} \par\caption{Rates of MDS code with the proposed load allocation for fixed $(N_1 = 100, \mu_{(1)} = 1,\alpha_{(1)} = 1)$  and various $(N_2, \mu_{(2)},\alpha_{(2)} = 1)$.}
	\label{Fig:Rate_fixed_N1}
\end{multicols}
\vspace{-0.1in}
\end{figure*}

%$(N_1,\mu_{(1)},\alpha_{(1)}) = (100,1,1)$
%\begin{figure}[t]
%%\vskip 0.2in
%	\centering{
%\vspace{-0.15in}
%  \includegraphics[width=0.5\linewidth]{Figures/S1_Lower_bound_times_N} }
%\caption{$N \times T^{\star}$ as a function of $q \boldsymbol{\mu}$, where $q$ is the scale of $\boldsymbol{\mu}$.}
%	\label{Fig:Tstar_vs_mu}
%\vskip -0.2in
%\vspace{-0.2in}
%\end{figure}

\begin{proof}
Recall that the optimal load allocation $(\boldsymbol{l}^*, \boldsymbol{r}^* )$ satisfies the constraints (\ref{Eq:Sumk}) and (\ref{Eq:Mean_equality_condition}).
%$$
%l_{(j)} \xi \left( r_j, N_j, \mu_{(j)} \right)  = l_{(j')} \xi \left( r_{j'}, N_{j'}, \mu_{(j')} \right)   ,
%$$
%$$
%\sum_{j \in [G]} r_j l_{(j)} = k  , 
%$$
%for $j \neq j' \in [G]$.
By solving the (\ref{Eq:Mean_equality_condition}) with respect to $l_{(j')}$, we have
\begin{equation} \label{Eq:Added_condition_refined}
	l_{(j')} = l_{(j)} \frac{\xi(r_j, N_j, \mu_{(j)})}{ \xi (r_{j'}, N_{j'}, \mu_{(j')})} \hspace{0.5in} \textnormal{ for } j'\neq j \in [G]  .
\end{equation}
Inserting  (\ref{Eq:Added_condition_refined}) to (\ref{Eq:Sumk}), we get 
\begin{equation} \label{Eq:loadl_j}
l_{(j)} = \frac{k} {r_j + \sum_{j' \neq j} r_{j'} \frac{ \xi ( r_j, N_j, \mu_{(j)} )}{ \xi ( r_{j'}, N_{j'}, \mu_{(j')} )}}  .
\end{equation}
It follows from (\ref{Eq:Mean_equality_condition}) and (\ref{Eq:loadl_j}) that 
\begin{equation} \label{Eq:Solution_G}
\max_{j\in [G]} \left\{ \lambda_{r_j: N_j}^{l_{(j)}} \right\} =   f ( \boldsymbol{r})  . %=  \frac{k \xi_1 \xi_2 }{\mu_{(1)} r_1 \xi_2 +  \mu_{(2)} (r - r_1) \xi_1 }.% \frac{k \left( 1 + \log \left(\frac{N_{2}}{N_{2} -r + r_{1}}\right) \right) \left( 1 +  \log \left(\frac{N_{1}}{N_{1} - r_{1}}\right) \right) }{r_1\mu_{(1)} \left( 1 + \log \left( \frac{N_{2}}{N_{2} -r + r_{1}}\right) \right) + (r - r_1)  \mu_{(2)} \left( 1 + \log \left(\frac{N_1}{N_1 - r_1} \right) \right) } .
%\lambda_{r_j: N_j}^{l_{(j)}} = \cmtb{\max_{j\in [G]} \left\{ \lambda_{r_j^*: N_j}^{l_{(j)}^*} \right\} = }
\end{equation}
%for $j \in [G]$.%that $\lambda_{r_1:N_1}$ is a strictly convex function on an open set $S$, where $S$ is a Cartersian product of open intervals $(0, N_j)$ for $j \in [G]$.
%Note that 
%\begin{equation}
%\xi(r_j^*,N_j,\mu_{(j)}) = 1 + \frac{1}{\mu_{(j)}} \log \left( - W_{-1} \left( -e^{-(\mu_{(j)} + 1)} \right) \right)  .
%\end{equation}

It follows from Lemma \ref{Lmm:Convexity_f(r)} that $f (\boldsymbol{r})$ has the unique extreme point $\boldsymbol{r}^*$ on $S$ defined in (\ref{Eq:CartesianS}).
From Lemma \ref{Lmm:r_j^*proof}, we have $r_j^*$ as in (\ref{Eq:Solutionr_j^*}) for $j \in [G]$. By inserting (\ref{Eq:Solutionr_j^*}) to (\ref{Eq:loadl_j}), we get $l_j^*$ described in (\ref{Eq:Solutionl_j^*}) for $j \in [G]$.
By inserting (\ref{Eq:Solutionr_j^*}) to (\ref{Eq:Solution_G}), we obtain
\begin{equation*}
f ( \boldsymbol{r}^*) = \frac{1}{ \sum_{j\in [G]} \frac{r_j^*}{\xi \left(r_j^*, N_j, \mu_{(j)} \right) }}  .
\end{equation*}
We obtain (\ref{Eq:r_jstarxi}) using the following equality $\log (-W_{-1} (z) ) + W_{-1}(z) = \log ( - z )  ,$
%\begin{equation*}
%\log (-W_{-1} (z) ) + W_{-1}(z) = \log ( - z )  ,
%\end{equation*}
where $W_{-1}(z)\le -1$ and $z \in \left[-\frac{1}{e}, 0\right)$.
Therefore, we obtain the minimum expected latency in (\ref{Eq:Optimal_latency_lower}).
\end{proof}

%\begin{figure}[t]
%	\centering{
%\vspace{-0.15in}
% \includegraphics[width=0.5\linewidth]{Figures/S4_rate_vs_various_setting_5cases} }
%\caption{Rates of MDS code with the proposed load allocation for fixed $(N_1 = 100, \mu_1 = 1)$ and various $(N_2, \mu_2)$.}
%	\label{Fig:Rate_fixed_N1}
%\vskip -0.2in
%\vspace{-0.2in}
%\end{figure}

The load allocation in \eqref{Eq:Solutionl_j^*} is real value. Accordingly, for implementation, we convert the result $l_{(j)}^*$ to integer $\ceil{l_{(j)}^*}$. Here, we use the ceil function.
In practical scenarios, the number of rows of $\boldsymbol{A}$, $k$, is fairly large, for example $k$ ranges from hundreds of thousands to millions. This implies that the number of rows assigned to worker is in the order of hundreds to thousands. Therefore, the round function on the optimal load allocation has a negligible effect on the performance.

%\begin{figure}[t]
%	\centering{
%\vspace{-0.15in}
%	\includegraphics[width=0.85\linewidth]{Figures/S1_Lower_bound_times_N}}
%	\caption{$N \times T^{\star}$ as a function of $q \boldsymbol{\mu}$, where $q$ is the scale of $\boldsymbol{\mu}$. }
%	\label{Fig:Tstar_vs_mu}
%\vspace{-0.2in}
%\end{figure}

Note that $T^{\star}$  depends only on $\boldsymbol{\mu}$ and $\boldsymbol{N}$. 
It can be easily seen that $T^{\star} = \Theta(\frac{1}{N})$. Fig. \ref{Fig:Tstar_vs_mu} illustrates the above statement where $\boldsymbol{N}:= (N_1, N_2, N_3) =  (1000, 2000, 3000)$, $\boldsymbol{\mu} = (\mu_{(1)}, \mu_{(2)}, \mu_{(3)}) = (2, 1, 0.5)$, and $\boldsymbol{\alpha} = (\alpha_{(1)}, \alpha_{(2)}, \alpha_{(3)}) = (1, 1, 1)$. We denote the scale of $\boldsymbol{\mu}$ as $q$.
% 여기에서 k는 다 지워지고 l_{(j)}
%\cmt{Recall that 
%\begin{equation} %\label{Eq:Group_expectation}
%T^{\star} = \lambda_{r_j^*:N_j}^{l_{(j)}^*} =  \frac{l_{(j)}^*}{k} \left( 1 + \frac{1}{\mu_{(j)}} \log \left(\frac{N_j}{N_j - r_j^*} \right) \right) .
%\end{equation}}
%\cmt{In $\frac{l_{(j)}^*}{k}$, $k$ is canceled.} %dummy variable이라고 해야하나...
In addition, $\frac{k}{n^*}$, the rate of $(n^*,k)$ MDS code, is a function of $\boldsymbol{\mu}$ and $\boldsymbol{N}$. 
% straggling parameter과 워커의 수가 rate에 어떠한 영향을 미치는지를 관찰하도록 한다. 먼저 비주얼라이제이션을 위하여 두개의 그룹이 있다고 가정하자.
%  
Next, we observe how the straggling parameter and the number of workers in each group influence the rate. It is assumed that there are two groups in order to facilitate visualization. In Fig \ref{Fig:Rate_fixed_N1}, $N_1$ and $\mu_{(1)}$ are set to $100$ and $1$, respectively. In addition, we set $\alpha_{(1)} = \alpha_{(2)} = 1$ to observe the change in the rate $\frac{k}{n^*}$ as the value of $(N_2,\mu_{(2)})$ varies.
If there is only one group, then the rate $\frac{k}{n^*}$ is a strictly increasing function with respect to straggling parameter.
Intuitively, for fixed $N_2$, the rate can be thought of as a strictly increasing function with respect to straggling parameter $\mu_{(2)}$. Interestingly, however, it is not true, as we can see in Fig. \ref{Fig:Rate_fixed_N1}. Next, we show that Theorem \ref{Thm:Main} is a generalization of the result in  \cite{Lee18Speeding}.

%\begin{figure}[t]
%	\centering{
%\vspace{-0.15in}
%	\includegraphics[width=0.90\linewidth]{Figures/S4_rate_vs_various_setting_5cases}}
%	\caption{Rates of MDS code with the proposed load allocation for fixed $(N_1 = 100, \mu_1 = 1)$ and various $(N_2, \mu_2)$. }
%	\label{Fig:Rate_fixed_N1}
%\vspace{-0.2in}
%\end{figure}

\begin{remark}\label{remark: gen}
Consider the case that there exist two groups. Assume that the straggling parameter of workers in the second group is arbitrarily small, i.e., $\mu_{(2)} \approx 0$. Note that $\xi(r_2, N_2, \mu_{(2)})$ in \eqref{Eq:xi} goes to infinity as $\mu_2$ goes to zero. This gives that
$\lim_{\mu_{(2)} \rightarrow 0} f( \boldsymbol{r}) = \frac{1}{r_1} \xi(r_1, N_1, \mu_{(1)})  .$

In the following way we reach the same conclusion as well.
Assume that there are $N_j$ workers in group $j$ and all workers have the same straggling and shift parameters, i.e., $\mu_{(j)} = \mu$ and $\alpha_{(j)} = \alpha$ for all $j \in [G]$.
From \eqref{Eq:Solutionl_j^*} and \eqref{Eq:Optimal_latency_lower}, we have
\begin{equation*}
l_{(j)}^* = \frac{k}{N \left(1 + \frac{1}{W_{-1} (-e^{-(\alpha \mu + 1)})} \right)}
\end{equation*}
and 
\begin{equation*}
T^{\star} = - \frac{W_{-1} \left( -e^{- (\alpha \mu + 1) }\right)}{\mu N}  .
\end{equation*}
These observations tell us that the result in our setting is a generalization of the result in \cite{Lee18Speeding} considering only one group with the same straggling and shift parameters (i.e, homogeneous workers).
\end{remark}

%각 그룹이 끝나는 시간을 같게 되도록 load balancing을 해준다고 해서 전체 끝나는 시간이 그 값에 수렴하는 것은 아니다. 

\subsection{Asymptotic Behavior of $\lambda_{r:N}$} \label{Subsec:Analysis_lambda}
% 여기서 쓸 것!
% 디스트리 뷰션에 대해서 쓰고, central order statistic에 대해서 쓰고, 주어진 옵티말 로드 얼로케이션에 대해서 로워바운드에 컨버즈 in distribution 이라는 것을 쓰고, 
In Section \ref{Subsec:Loadallocation}, we get the optimal load allocation for the lower bound of the expected latency $\lambda_{r:N}$ given $k$, $\boldsymbol{N}$, and $\boldsymbol{\mu}$. In this subsection, we show that for given $\boldsymbol{l}^*$ and $\boldsymbol{r}^*$ obtained from Theorem \ref{Thm:Main}, $\lambda_{r:N}$ converges to $T^{\star}$ as $N$ goes to infinity.
First, we introduce the asymptotic results for order statistic.

%\bibitem{Reiss89Approximate}
%R. -D. ~Reiss, ``Approximate distribution of order statistics.'' \emph{Springer, New York}, 1989.

\begin{proposition} \label{Lmm:Momentum} [Theorem 6.1.1 in  \cite{Reiss89Approximate}, (central order statistic)] 
Let $l_{(j)}$ be given. For fixed $0 < q_j < 1$,
$T_{q_jN_j : N_j}^{l_{(j)}}$ converges in distribution to $X$, where 
\begin{equation} \label{Eq:Central_order}
X \sim \mathcal{N} \left( \eta_j,  \sigma_{j}^2  \right) \textnormal{, } \hspace{0.5in} \sigma_{j}^2 = \frac{q_j(1-q_j)}{N_j (f_j(\eta_j))^2} ,
\end{equation}
$F_j' = f_j$, and $\eta_j = F_{j}^{-1}(q_j)$. We denote $T_{q_jN_j : N_j}^{l_{(j)}} \xrightarrow[]{d} \mathcal{N} \left( \eta_j, \sigma_{j}^2  \right)  .$
%\begin{equation*}
%T_{q_jN_j : N_j}^{l_{(j)}} \xrightarrow[]{d} \mathcal{N} \left( \eta_j, \sigma_{j}^2  \right)  .
%\end{equation*}
%\textnormal{ and }\eta_j = F_{(j)}^{-1}(q)$$
\end{proposition}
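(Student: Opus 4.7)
The plan is to derive this classical result via two standard steps: the quantile transformation to reduce to the uniform case, and then the Delta method applied to the asymptotic normality of uniform central order statistics.

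First, I would apply the probability integral transform. Since the shifted exponential CDF $F_j$ in \eqref{Eq:CDF_in_Group_j} is continuous and strictly increasing on its support $[\alpha_{(j)} l_{(j)}/k, \infty)$, if $U_1, \ldots, U_{N_j}$ are i.i.d.\ Uniform$(0,1)$ with order statistics $U_{(1)} \le \cdots \le U_{(N_j)}$, then $F_j^{-1}(U_{(m)})$ has the same distribution as the $m$-th order statistic of $N_j$ runtimes drawn from $F_j$. This reduces the problem to the asymptotic distribution of the central uniform order statistic $U_{(\lceil q_j N_j \rceil)}$.

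Second, I would establish the asymptotic normality of $U_{(\lceil q_j N_j \rceil)}$ using the R\'enyi representation: with $E_1, E_2, \ldots$ i.i.d.\ Exp$(1)$ and $S_m = \sum_{i=1}^m E_i$, one has $U_{(m)} \stackrel{d}{=} S_m / S_{N_j+1}$. A joint central limit theorem for the pair $(S_{\lceil q_j N_j \rceil}, S_{N_j+1})$, combined with the Delta method applied to the ratio $(x,y)\mapsto x/y$, yields
\begin{equation*}
\sqrt{N_j}\,\bigl(U_{(\lceil q_j N_j \rceil)} - q_j\bigr) \xrightarrow[]{d} \mathcal{N}\bigl(0,\, q_j(1-q_j)\bigr).
\end{equation*}
Composing once more with the Delta method, using that $f_j(\eta_j) > 0$ implies $F_j^{-1}$ is differentiable at $q_j$ with derivative $1/f_j(\eta_j)$, gives
\begin{equation*}
\sqrt{N_j}\,\bigl(T_{\lceil q_j N_j \rceil:N_j}^{l_{(j)}} - \eta_j\bigr) \xrightarrow[]{d} \mathcal{N}\!\left(0,\, \frac{q_j(1-q_j)}{\bigl(f_j(\eta_j)\bigr)^2}\right),
\end{equation*}
which, after absorbing the $\sqrt{N_j}$ scaling into the variance, is exactly the claimed convergence $T_{q_j N_j:N_j}^{l_{(j)}} \xrightarrow[]{d} \mathcal{N}(\eta_j, \sigma_j^2)$.

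The main obstacle is the careful handling of the second step: one must justify that the rounding error $\lceil q_j N_j \rceil - q_j N_j$ contributes only an $O(1/\sqrt{N_j})$ perturbation that vanishes in the limit, and one must verify that the centering and scaling of the R\'enyi ratio produces the variance $q_j(1-q_j)$ rather than a different constant arising from the correlation between numerator and denominator. Since this bookkeeping is precisely the content of Theorem~6.1.1 in \cite{Reiss89Approximate}, the present proposition can be invoked by direct citation once the setup has been cast in terms of uniform quantiles as above.
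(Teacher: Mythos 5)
Your proposal is correct, but it is worth noting that the paper does not prove this proposition at all: it is imported verbatim as Theorem~6.1.1 of \cite{Reiss89Approximate} and used as a black box, so there is no in-paper argument to compare against. What you supply is the standard self-contained derivation of the central-order-statistic CLT: reduce to the uniform case via the quantile transform, represent $U_{(m)}$ through the R\'enyi ratio $S_m/S_{N_j+1}$ of partial sums of i.i.d.\ exponentials, apply a bivariate CLT plus the Delta method for $(x,y)\mapsto x/y$ (the covariance $\mathrm{Cov}(S_m,S_{N_j+1})=m$ is exactly what turns the naive variance into $q_j(1-q_j)$, and your bookkeeping there checks out), and then compose with $F_j^{-1}$, whose derivative $1/f_j(\eta_j)$ exists and is finite because the shifted-exponential density is strictly positive at $\eta_j=F_j^{-1}(q_j)$ for $0<q_j<1$. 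Your handling of the $\sqrt{N_j}$ normalization is also the honest reading of the proposition, since as literally stated the limit ``distribution'' has an $N_j$-dependent variance; the precise statement is the one you write with the $\sqrt{N_j}(\,\cdot\,-\eta_j)$ scaling. The only cost of your route relative to the paper's bare citation is length; what it buys is that the reader need not trust the external reference, and it makes explicit the regularity conditions ($F_j$ continuous and strictly increasing, $f_j(\eta_j)>0$) that the paper's usage implicitly relies on in Lemma~\ref{Lmm:Sigma_j=0} and Theorem~\ref{Thm:Convergencelambda}.
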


Next, for given $\boldsymbol{l}^*$ and $\boldsymbol{r}^*$, we show that $\sigma_j^2$ goes to zero as $N$ goes to infinity in Lemma \ref{Lmm:Sigma_j=0}. 
Finally, using Lemma \ref{Lmm:Sigma_j=0}, we will prove that $\lambda_{r:N}$ converges to $T^{\star}$ in Theorem \ref{Thm:Convergencelambda}, if $\boldsymbol{l}^*$ and $\boldsymbol{r}^*$ are given.
For simplicity, we denote
\begin{align*}
\bar{\xi}(\mu_{(j)}) \coloneqq \xi(r_j^*,N_j,\mu_{(j)})  =  \alpha_{(j)} + \frac{1}{\mu_{(j)}} \log \left( - W_{-1} \left( -e^{-(\alpha_{(j)}\mu_{(j)} + 1)} \right) \right)  .
\end{align*}

\begin{lemma} \label{Lmm:Sigma_j=0}
For given ($\boldsymbol{l}^*, \boldsymbol{r}^*$), the variance $\sigma_j^2$ in (\ref{Eq:Central_order}) goes to zero as $N$ goes to infinity.
\end{lemma}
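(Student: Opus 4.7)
The plan is to show each factor in $\sigma_j^2 = \frac{q_j(1-q_j)}{N_j(f_j(\eta_j))^2}$ scales favorably with $N$. Set $q_j = r_j^*/N_j$, so Proposition \ref{Lmm:Momentum} applies with $r_j^* = q_j N_j$. From the expression \eqref{Eq:Solutionr_j^*} for $r_j^*$, I would first observe that
$$q_j = 1 + \frac{1}{W_{-1}\!\left(-e^{-(\alpha_{(j)}\mu_{(j)}+1)}\right)}$$
is a constant depending only on $\alpha_{(j)}$ and $\mu_{(j)}$, and lies in $(0,1)$ since $W_{-1}(z)<-1$ for $z\in[-1/e,0)$. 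Hence $q_j(1-q_j)$ is a positive constant in $N$.

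Next I would compute $f_j(\eta_j)$ in closed form. Differentiating $F_j$ from \eqref{Eq:CDF_in_Group_j} gives $f_j(t) = \frac{k\mu_{(j)}}{l_{(j)}^*}\,e^{-\frac{k\mu_{(j)}}{l_{(j)}^*}(t-\alpha_{(j)} l_{(j)}^*/k)}$, and evaluating at $\eta_j=F_j^{-1}(q_j)$ yields $e^{-\frac{k\mu_{(j)}}{l_{(j)}^*}(\eta_j-\alpha_{(j)}l_{(j)}^*/k)}=1-q_j$. Therefore
$$f_j(\eta_j) = \frac{k\mu_{(j)}(1-q_j)}{l_{(j)}^*}.$$
Since $k$, $\mu_{(j)}$, and $q_j$ are constants in $N$, the asymptotic behavior of $f_j(\eta_j)$ is governed entirely by that of $l_{(j)}^*$.

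The key remaining step, and the main obstacle, is to establish $l_{(j)}^* = \Theta(1/N)$. From \eqref{Eq:Solutionl_j^*}, $l_{(j)}^* = k\big/\bigl(r_j^* + \sum_{j'\neq j}r_{j'}^*\,\xi(r_j^*,N_j,\mu_{(j)})/\xi(r_{j'}^*,N_{j'},\mu_{(j')})\bigr)$. By \eqref{Eq:r_jstarxi}, each $\xi(r_{j'}^*,N_{j'},\mu_{(j')}) = \bar\xi(\mu_{(j')})$ is a constant (in $N$) depending only on the group parameters, while $r_{j'}^* = q_{j'}N_{j'}$ with $N_{j'}=\Theta(N)$ by assumption. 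Thus the denominator is $\Theta(N)$, giving $l_{(j)}^* = \Theta(1/N)$ and consequently $f_j(\eta_j)=\Theta(N)$.

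Combining these estimates, $\sigma_j^2 = \frac{q_j(1-q_j)}{N_j(f_j(\eta_j))^2} = \frac{\Theta(1)}{\Theta(N)\cdot\Theta(N^2)} = \Theta(1/N^3) \to 0$ as $N\to\infty$, which is the desired conclusion. The only delicate point is justifying that all constants (including $\bar\xi(\mu_{(j)})$ and the ratios $\xi(r_j^*,N_j,\mu_{(j)})/\xi(r_{j'}^*,N_{j'},\mu_{(j')})$) are genuinely independent of $N$; this follows directly from \eqref{Eq:r_jstarxi}, after which the $\Theta$-bookkeeping is routine.
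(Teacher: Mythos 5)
Your proof is correct and follows essentially the same route as the paper: both arguments show $q_j = r_j^*/N_j$ is a constant in $(0,1)$, establish $f_j(\eta_j) = \Theta(N)$ via $k/l_{(j)}^* = \Theta(N)$ from \eqref{Eq:Solutionl_j^*} together with $N_{j'} = \Theta(N)$, and conclude $\sigma_j^2 = \Theta(1/N^3) \to 0$. Your shortcut of evaluating the density through $e^{-\frac{k\mu_{(j)}}{l_{(j)}^*}(\eta_j - \alpha_{(j)}l_{(j)}^*/k)} = 1 - F_j(\eta_j) = 1-q_j$ is a slightly cleaner bookkeeping than the paper's explicit substitution of $\eta_j^* = \frac{l_{(j)}^*}{k}\bar{\xi}(\mu_{(j)})$, but it is the same argument in substance.
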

\begin{proof} \label{Eq:Sigma1}
Using (\ref{Eq:Solutionr_j^*}), we obtain $q_j^* = \frac{r_j^*}{N_j} = 1 + \frac{1}{W_{-1} \left( - e^{-\left(\alpha_{(j)}\mu_{(j)} + 1 \right)} \right) } .$
%\begin{equation*}
%q_j^* = \frac{r_j^*}{N_j} = 1 + \frac{1}{W_{-1} \left( - e^{-\left(a_{(j)}\mu_{(j)} + 1 \right)} \right) } .
%\end{equation*}
Recall that 
\begin{equation} \label{Eq:Sigma2}
f_j (\eta_j^*) =  \mu_{(j)} \frac{k }{l_{(j)}^*} e^{-\mu_{(j)} \left( \frac{k }{l_{(j)}^*} \eta_j^* - 1 \right)}  .
\end{equation}
From (\ref{Eq:Solutionl_j^*}), we have
\begin{equation}\label{Eq:Sigma3}
\frac{k}{l_{(j)}^*} = N_j q_j^*+ \sum_{j' \neq j}   N_{j' } q_{j'}^* \frac{\bar{\xi}(\mu_{(j)}) }{\bar{\xi}(\mu_{(j')}) }  .
\end{equation}
Note that 
\begin{equation} \label{Eq:Sigma4}
\eta_j^* = F_j^{-1} (q_j^*) = \frac{l_{(j)}^*}{k} \bar{\xi}(\mu_{(j)})  .
\end{equation}
By inserting  (\ref{Eq:Sigma3}) and (\ref{Eq:Sigma4}) to (\ref{Eq:Sigma2}), we obtain
\begin{equation*}% \label{Eq:Sigma5}
f_j( \eta_j^* )  = \left ( N_j q_j^*+ \sum_{j' \neq j}   N_{j' } q_{j'}^* \frac{\bar{\xi}(\mu_{(j)}) }{\bar{\xi}(\mu_{(j')}) }\right) a(\mu_{(j)})  , \nonumber
\end{equation*}
where $a(\mu_{(j)}) = - \frac{\mu_{(j)}}{ W_{-1} (-e^{-(\alpha_{(j)} \mu_{(j)} + 1)})}  .$
%\begin{equation*}
%a(\mu_{(j)}) = - \frac{\mu_{(j)}}{ W_{-1} (-e^{-(a_{(j)} \mu_{(j)} + 1)})}  .
%\end{equation*}
Observe that $q_{j'}^*$, $\bar{\xi}(\mu_{(j)})$, and $a(\mu_{(j)})$ depend only on $\mu_{(j)}$.
In addition, if $N$ goes to infinity, then $N_j$ goes to infinity for $j \in [G]$.
Thus, $\sigma_j^2$ converges to zero as $N$ goes to infinity for given  ($\boldsymbol{l}^*,\boldsymbol{r}^*$).
\end{proof}

Lemma \ref{Lmm:Sigma_j=0} implies that  $T_{r_j^* : N_j}^{l_{(j)}^*} \xrightarrow[]{d} \eta_j^*.$
%\begin{equation*} %\label{Eq:Converge_constant}
%T_{r_j^* : N_j}^{l_{(j)}^*} \xrightarrow[]{d} \eta_j^*.
%\end{equation*}

%$$ lim_{N \rightarrow \infty} \frac{F_{N}(x)}{F(x)} = 1$$

\begin{theorem} \label{Thm:Convergencelambda}
For given ($\boldsymbol{l}^*, \boldsymbol{r}^*$), 
$\lambda_{r^*: N}$ converges to $T^{\star}$ as $N$ goes to infinity.
\end{theorem}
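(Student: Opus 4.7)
The plan is to sandwich $\lambda_{r^*:N}$ between $T^\star$ and an asymptotically matching upper bound. The lower bound comes for free: taking expectations in the decomposition $T_{r^*:N} = \max_{j\in[G]} \{T_{r_j^*:N_j}^{l_{(j)}^*}\}$ of \eqref{Eq:Order_max} and using $E[\max_j X_j] \ge \max_j E[X_j]$ together with Theorem \ref{Thm:Lower_condition} gives
\begin{equation*}
\lambda_{r^*:N} \ge \max_{j\in[G]} \lambda_{r_j^*:N_j}^{l_{(j)}^*} = T^\star .
\end{equation*}

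For the matching upper bound I would first pass to convergence in probability. The key observation is that, by \eqref{Eq:Group_expectation} and \eqref{Eq:Sigma4}, the centering constant in Proposition \ref{Lmm:Momentum} satisfies $\eta_j^* = \frac{l_{(j)}^*}{k}\bar{\xi}(\mu_{(j)}) = \lambda_{r_j^*:N_j}^{l_{(j)}^*}$, and by Theorem \ref{Thm:Lower_condition} this common value equals $T^\star$ for every $j \in [G]$. Combined with Lemma \ref{Lmm:Sigma_j=0}, Proposition \ref{Lmm:Momentum} then yields $T_{r_j^*:N_j}^{l_{(j)}^*} \xrightarrow[]{d} T^\star$, a constant, and hence also convergence in probability. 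Applying the continuous mapping theorem to the maximum over the finite index set $[G]$ gives $T_{r^*:N} \xrightarrow[]{p} T^\star$.

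The hard part will be upgrading convergence in probability to convergence in mean. My approach is to establish uniform integrability of $\{T_{r^*:N} / T^\star\}_N$ via a second-moment bound. Using the crude inequality $T_{r^*:N} \le \sum_{j\in[G]} T_{r_j^*:N_j}^{l_{(j)}^*}$ together with the closed-form variance of the $r_j^*$-th order statistic of $N_j$ shifted exponentials, $\mathrm{Var}(T_{r_j^*:N_j}^{l_{(j)}^*}) = \frac{l_{(j)}^{*2}}{k^2 \mu_{(j)}^2} \sum_{i = N_j - r_j^* + 1}^{N_j} i^{-2}$, and noting that $l_{(j)}^* = \Theta(1/N)$ while the tail harmonic sum is $\Theta(1/N)$, one obtains $E[T_{r^*:N}^2] = \Theta((T^\star)^2)$. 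A Cauchy--Schwarz truncation at level $(1+\delta)T^\star$ then produces
\begin{equation*}
E\bigl[T_{r^*:N}\, \mathbf{1}\{T_{r^*:N} > (1+\delta)T^\star\}\bigr] \le \sqrt{E[T_{r^*:N}^2]}\,\sqrt{\Pr(T_{r^*:N} > (1+\delta)T^\star)} = o(T^\star),
\end{equation*}
since the probability factor vanishes by the convergence in probability established above. Combining with the trivial bound $(1+\delta)T^\star$ on the truncated part and sending $\delta \to 0$ closes the upper bound and yields $\lambda_{r^*:N} = T^\star(1+o(1))$, which together with the lower bound gives the theorem. The main technical obstacle is the moment/tail control just described: because both $T^\star$ and the group-wise standard deviations $\sigma_j$ shrink with $N$, the relative scales must be tracked carefully to ensure that the upper-tail contribution is genuinely negligible compared to $T^\star$ itself.
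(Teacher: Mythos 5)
Your proposal is correct, and its core — establishing $T_{r^*:N}\xrightarrow[]{d}T^{\star}$ via Proposition \ref{Lmm:Momentum}, Lemma \ref{Lmm:Sigma_j=0}, the identification $\eta_j^* = \lambda_{r_j^*:N_j}^{l_{(j)}^*} = T^{\star}$ from Theorem \ref{Thm:Lower_condition}, and the max structure of \eqref{Eq:Order_max} — is exactly the paper's proof; the only cosmetic difference is that the paper multiplies the group-wise CDFs (using independence) to get the limiting step function $H_{T^{\star}}(t)$, whereas you invoke the continuous mapping theorem on the finite maximum, which does not even need independence. Where you genuinely diverge is your third step: the paper stops at convergence in distribution and implicitly treats this as sufficient for $\lambda_{r^*:N}\to T^{\star}$, whereas you correctly observe that convergence in distribution (or probability) to a constant does not by itself give convergence of expectations, and you supply the missing uniform-integrability argument via the Rényi-representation variance $\frac{l_{(j)}^{*2}}{k^2\mu_{(j)}^2}\sum_{i=N_j-r_j^*+1}^{N_j}i^{-2}=\Theta(N^{-3})=o((T^{\star})^2)$ and a Cauchy--Schwarz truncation. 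This addition is both sound and necessary; it also forces you to confront a subtlety the paper glosses over, namely that $T^{\star}=\Theta(1/N)$ itself vanishes, so the statement only has content in the relative sense $\lambda_{r^*:N}/T^{\star}\to 1$, and the concentration must be checked at scale $T^{\star}$ (which holds since $\sigma_j/T^{\star}=\Theta(N^{-1/2})$). In short: same route for the distributional limit, but your proof closes a real gap in the paper's argument and is the more complete of the two.
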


\begin{proof}
It suffices to show that $T_{r:N} \xrightarrow[]{d}  T^{\star}.$
%\begin{equation*}
%T_{r:N} \xrightarrow[]{d}  T^{\star}.
%\end{equation*}
Since $T_{r_j:N_j}$'s are independent, by using (\ref{Eq:Order_max}), we get $$ F(t) := \Pr(T_{r:N}\le t)  = \prod_{j\in [G]} \Pr \left(T_{r_j:N}^{l_{(j)}}\le t \right) .$$
%\begin{align*}
% F(t) := \Pr(T_{r:N}\le t)  = \prod_{j\in [G]} \Pr \left(T_{r_j:N}^{l_{(j)}}\le t \right) .
%\end{align*}
It follows from Proposition \ref{Lmm:Momentum} and Lemma \ref{Lmm:Sigma_j=0} that for given ($\boldsymbol{l}^*, \boldsymbol{r}^*$), we have  $$\prod_{j\in [G]} \Pr \left(T_{r_j:N}^{l_{(j)}}\le t \right) \rightarrow \prod_{j\in [G]} H_{\eta_j^*}(t) ,$$
%\begin{equation*}
%\prod_{j\in [G]} \Pr \left(T_{r_j:N}^{l_{(j)}}\le t \right) \rightarrow \prod_{j\in [G]} H_{\eta_j^*}(t) ,
%\end{equation*}
where $H_{a}(t) = 1$ if $t \ge a$ and $H_{a}(t) = 0$ if $t < a$.
Since $\eta_j^* = T^{\star}$ for all $j\in [G]$. we get $$\prod_{j\in [G]} H_{\eta_j}(t) = H_{T^{\star}}(t)  .$$
%\begin{equation*}
%\prod_{j\in [G]} H_{\eta_j}(t) = H_{T^{\star}}(t)  .
%\end{equation*}
It follows that $T_{r:N}$ converges in distribution to $T^{\star}$.
\end{proof}

Recall that $\lambda_{r:N} \ge  \max_{j\in [G]} \left\{ \lambda_{r_j: N_j}^{l_{(j)}} \right\} \ge \lambda_{r_j^*: N_j}^{l_{(j)}^*} = T^{\star}  . $
%\begin{equation*}
%\lambda_{r:N} \ge  \max_{j\in [G]} \left\{ \lambda_{r_j: N_j}^{l_{(j)}} \right\} \ge \lambda_{r_j^*: N_j}^{l_{(j)}^*} = T^{\star}  . 
%\end{equation*}
Note that the quantity $\lambda_{r:N}$ does not mean the maximum of the expected latencies of groups required to wait for $r_j$ workers from each group $j$ for $j\in [G]$. 
$\lambda_{r:N}$ means the expected latency when the master aggregates $k$ coded inner products from nonidentically independent $N$ workers. 
Since the load allocation in \eqref{Eq:Solutionl_j^*} is obtained to achieve the minimum of $ \max_{j\in [G]} \left\{ \lambda_{r_j: N_j}^{l_{(j)}} \right\}$, the quantity $\lambda_{r:N}$ with the  load allocation in \eqref{Eq:Solutionl_j^*} can be thought of as greater than $T^{\star}$.
However, it follows from the result in the subsection that we obtain the load allocation in $\eqref{Eq:Solutionl_j^*}$ is also (asymptotically) optimal for $\lambda_{r:N}$  thanks to the asymptotic behavior of central order statistic.
In short, for given $(\boldsymbol{l}^*, \boldsymbol{r}^*)$, Theorem \ref{Thm:Convergencelambda} indicates that $\lambda_{r:N }$ converges to the lower bound $T^{\star}$ as $N$ goes to infinity.

\subsection{Uniform Load Allocation} \label{Subsec:Uniform}

In contrast to the aforementioned load allocation strategy, in this subsection, we simply consider the uniform load allocation, i.e., we assign the same amount of task to each worker. For fixed $k$, $\boldsymbol{N}$, and $\boldsymbol{\mu}$, we consider the following constraints:
\begin{align*} 
n = \sum^N_{j = 1} N_j l_{(j)} \,  \hspace{0.5in} \textnormal{and}  \hspace{0.5in} \sum_{j \in [G]} r_j = r  .
\end{align*}
In the above constraints, we consider the two cases where $n$ is fixed or $r$ is fixed.
Uniform load allocation schemes for fixed parameters $n$ and $r$ are presented as follows.

\

\subsubsection{Uniform load allocation for given $n$}
Assume that $n$ is fixed. This means that we have an $(n,k)$ MDS code.
Since we assume that $l_{(j)}$ is constant, we obtain $l_{(j)}N = n$ for $j\in [G]$. This gives that $l_i = \frac{n}{N}$ for $i \in [N]$. Let us denote $l^{u} := \frac{n}{N}$.
In addition, the condition \eqref{Eq:Sumk} for guaranteeing the successful recovery of $\mtx{A}\mathbf{x}$ is rephrased as
\begin{equation} \label{Eq:Constraint_group_uniform}
\sum_{j \in [G]} r_j = \frac{kN}{n}  . %\textnormal{ where } \sum_{j\in [G]} r_j = r .
\end{equation}
Note that (\ref{Eq:Constraint_group_uniform}) implies that $\frac{k}{n} = \frac{r}{N}$, where $\frac{k}{n}$ is a code rate of an $(n,k)$ MDS code.

Consider the $(n^*,k)$ MDS code obtained through the optimal load allocation in Theorem \ref{Thm:Main}. 
When using the $(n^*,k)$ MDS code, the expected latency of the proposed load allocation and the uniform load allocation will be compared.

\subsubsection{Uniform load allocation for given $r$} \label{Subsubsec:Uniform}%in \cite{Kim19Coded}}
%This subsection is made for explaining a scheme in \cite{Kim19Coded}. 
The system model used in \cite{Kim19Coded} is described according to our probabilistic model as follows. \footnote{Here, we set $\alpha_{(j)} = \alpha$ for $j\in [G]$ to apply for the scheme proposed in \cite{Kim19Coded}.}
In our system model, $r$ is a parameter created by assuming the time to receive $k$ inner products required for MDS decoding. However, in \cite{Kim19Coded}, it is assumed that a fixed $r$ is given. The assumption that $r$ is fixed means that the master performs MDS decoding to obtain $\mtx{A} \mathbf{x}$ when the computation results are received from $r$ workers out of $N$ workers. Thus, in order to use this system model, 
the total number of workers is required to be greater than or equal to $r$. Then we have $l^r := l_i = \frac{k}{r} $ for $i \in [N] .$
%\begin{equation*}
%l^r := l_i = \frac{k}{r} \hspace{0.3in} \textnormal{for } i \in [N] .
%\end{equation*}
Thus, for fixed $r$, we use an $(n,k)$ MDS code, where $n = N l^r $.
It can be interpreted that the number of rows of the data matrix allocated to each worker is fixed to $l^r$.

Regarding this case, we provide the following observation.
For a fixed $r$, let $ (r_1, r_2, \dots, r_G)$ be given.
For group $j$, the expected latency is then represented as
$
\lambda_{r_j: N_j} = \frac{l^r}{k} \left( \alpha + \frac{1}{\mu_{(j)}} \log \left( \frac{N_j}{N_j - r_j}\right) \right)  ,
$
which implies that $\lambda_{r_j: N_j}$ converges to $\frac{1}{r}$ as $N$ approaches infinity.
Thus, for any $ (r_1, r_2, \dots, r_G)$, we have $\lambda_{r_j:N_j}$ is equal to $\frac{1}{r}$ for a sufficiently large $N$.
This means that the expected latency of a system using the MDS code with a fixed $r$ is given by $\frac{1}{r}$ for a sufficiently large $N$.
%Therefore, for sufficiently large $N$, the expected latency using the MDS code with fixed $r$ is $\frac{1}{r}$.

In \cite{Kim19Coded}, the authors proposed a scheme to achieve the optimal latency of the MDS code described above.
Then, we get the following corollary to Lemma \ref{Lmm:Lower_conditionG=2} for fixed $r$.

\begin{corollary}\label{Cor:Lower_condition_uniform}
For given $G = 2$ and $n$, consider the uniform load allocation, i.e. $l_i = l^r$. Let $\boldsymbol{r} = (r_1, r_2)$ be the numbers of workers to complete the uniformly assigned task $l^r$  which achieves the minimum of $ \max_{j\in [G]} \left\{ \lambda_{r_j: N_j}^{l^{r}} \right\}$.
Then 
\begin{equation} \label{Eq:AddedConstraint_uniform}
\hspace{0.5in} \lambda_{r_j: N_j}^{l^r} = \lambda_{r_{j'}: N_{j'}}^{l^r}  \,, \hspace{0.5in} \textnormal{ for all }  j \neq j' \in [G].
\end{equation}
% for all $j \neq j' \in [G]$.
\end{corollary}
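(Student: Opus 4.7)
\medskip
\noindent\textbf{Proof proposal.} The plan is to imitate the contradiction argument of Lemma \ref{Lmm:Lower_conditionG=2}, but exchanging the roles of the decision variables: under the uniform policy the per-worker load $l^r$ is locked, so the only freedom left lies in the group-wise completion counts $(r_1, r_2)$, which are coupled by the MDS recovery constraint \eqref{Eq:Constraint_group_uniform} into a single one-dimensional trade-off $r_1 + r_2 = kN/n$. I will argue that any disparity between the two group-wise expected latencies can be removed by transferring a small amount of ``completion count'' from the busier group to the lighter one, strictly lowering the maximum.

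First I would record the monotonicity fact that, for a fixed $l^r$, the map $r_j \mapsto \lambda_{r_j:N_j}^{l^r}$ given in \eqref{Eq:Group_expectation} is strictly increasing and continuous on $(0, N_j)$, which follows immediately because $r_j \mapsto \log(N_j/(N_j - r_j))$ is strictly increasing there. Next, assuming for contradiction that $\lambda_{r_1:N_1}^{l^r} > \lambda_{r_2:N_2}^{l^r}$ (without loss of generality), I would invoke continuity to pick a $\delta > 0$ small enough that replacing $(r_1, r_2)$ by $(r_1 - \delta,\, r_2 + \delta)$ keeps $\lambda_{r_1-\delta:N_1}^{l^r}$ below its original value (by monotonicity) and keeps $\lambda_{r_2+\delta:N_2}^{l^r}$ below $\lambda_{r_1:N_1}^{l^r}$; the latter is possible precisely because the strict gap between the two originals leaves room for a small upward perturbation in $r_2$.

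With such a $\delta$ in hand, the balance $r_1 + r_2 = kN/n$ is preserved, so the perturbed pair is feasible under \eqref{Eq:Constraint_group_uniform}, and the new maximum is strictly smaller than the old one, contradicting the asserted optimality of $(r_1, r_2)$. The main obstacle I foresee is not an analytic one but a bookkeeping issue: one must remain in the regime where $r_j$ is treated as a real variable, as was adopted at the start of Section \ref{Subsec:Add_condition}, so that the infinitesimal balancing step is meaningful; once that convention is acknowledged, the argument is essentially identical to the $l$-balancing step in Lemma \ref{Lmm:Lower_conditionG=2}, with the only novelty being that the balancing now happens along the $r$-axis rather than the $l$-axis.
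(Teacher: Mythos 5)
Your proposal is correct and is essentially the proof the paper has in mind: the paper omits the argument, stating only that it follows by ``the same method used in Lemma~\ref{Lmm:Lower_conditionG=2}'', and your contradiction-by-balancing argument---perturbing $(r_1,r_2)$ to $(r_1-\delta,\,r_2+\delta)$ under the constraint \eqref{Eq:Constraint_group_uniform}, using continuity and strict monotonicity of $r_j\mapsto\lambda_{r_j:N_j}^{l^r}$---is exactly that adaptation, with the perturbation correctly moved from the $l$-variables to the $r$-variables since the load is fixed. No gaps of substance.
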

\begin{proof}
This corollary can be easily verified by applying the same method used in Lemma \ref{Lmm:Lower_conditionG=2}. So we omit the proof.
\end{proof}

In the uniform load allocation, we have $\lambda_{r_j:N_j}^{l^r} =  \frac{l^r}{k} \left( \alpha + \frac{1}{\mu_{(j)}} \log \left(\frac{N_j}{N_j - r_j} \right) \right) .$
%\begin{equation*} %\label{Eq:Group_expectation_uniform}
%\lambda_{r_j:N_j}^{l^r} =  \frac{l^r}{k} \left( 1 + \frac{1}{\mu_{(j)}} \log \left(\frac{N_j}{N_j - r_j} \right) \right) .
%\end{equation*}
It follows from (\ref{Eq:AddedConstraint_uniform}) that we have
\begin{equation} \label{Eq:Mean_equality_condition_uniform}
\frac{1}{\mu_{(j)}} \log \left(\frac{N_j}{N_j - r_j} \right) = \frac{1}{\mu_{(j')}} \log \left(\frac{N_{j'}}{N_{j'} - r_{j'}} \right)  .
\end{equation}
Solving the (\ref{Eq:Mean_equality_condition_uniform}) with respect to $r_{j'}^u$, we insert $r_{j'}^u$ to (\ref{Eq:Constraint_group_uniform}).
Then we obtain the following theorem.
\begin{theorem} [Theorem 3 in \cite{Kim19Coded}] \label{Thm:group_coded_equation}
For given $r$ and $G$ $(= 2)$, consider the uniform load allocation. Then $\boldsymbol{r}$ can be determined by solving the following equation.
\begin{equation} \label{Eq:group_coded_equation}
\hspace{0.5in} r_j + \sum_{j' \neq j} N_{j'} \left( 1- \left( 1- \frac{r_j}{N_j} \right)^{\frac{\mu_{(j')}}{\mu_{(j)}}} \right) = r \,,  \hspace{0.5in} \textnormal{ for } j\in [G]  .%\hspace{0.5in}
\end{equation}
\end{theorem}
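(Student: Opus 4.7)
The plan is to derive the equation directly from the equal-latency optimality condition Corollary~\ref{Cor:Lower_condition_uniform} combined with the load-conservation constraint \eqref{Eq:Constraint_group_uniform}. The only nontrivial content is the algebraic manipulation of \eqref{Eq:Mean_equality_condition_uniform}; everything else is substitution.

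First I would invoke Corollary~\ref{Cor:Lower_condition_uniform}, which tells us that under uniform load allocation with $G=2$ the minimizer $\boldsymbol{r}=(r_1,r_2)$ of $\max_{j}\{\lambda_{r_j:N_j}^{l^r}\}$ must equalize the group latencies. Since $l^r=k/r$ and $\alpha_{(j)}=\alpha$ are common to both groups, equality of $\lambda_{r_j:N_j}^{l^r}$ reduces to \eqref{Eq:Mean_equality_condition_uniform}, that is,
\begin{equation*}
\frac{1}{\mu_{(j)}}\log\!\left(\frac{N_j}{N_j-r_j}\right)=\frac{1}{\mu_{(j')}}\log\!\left(\frac{N_{j'}}{N_{j'}-r_{j'}}\right).
\end{equation*}

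Next I would solve this equation for $r_{j'}$ as a function of $r_j$. Multiplying both sides by $\mu_{(j')}$ and exponentiating gives $\bigl(N_j/(N_j-r_j)\bigr)^{\mu_{(j')}/\mu_{(j)}}=N_{j'}/(N_{j'}-r_{j'})$, and rearranging yields
\begin{equation*}
r_{j'}=N_{j'}\left(1-\left(1-\frac{r_j}{N_j}\right)^{\mu_{(j')}/\mu_{(j)}}\right).
\end{equation*}
Substituting this expression into the conservation constraint \eqref{Eq:Constraint_group_uniform}, written in the form $r_j+\sum_{j'\neq j}r_{j'}=r$, produces \eqref{Eq:group_coded_equation} exactly. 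The derivation is symmetric in $j\in\{1,2\}$, so the same fixed-point equation is obtained regardless of which index we solve for first.

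The only subtle point, which I would briefly note, is that \eqref{Eq:group_coded_equation} admits a unique solution in the feasible region $r_j\in(0,N_j)$: the left-hand side is continuous and strictly increasing in $r_j$ (the term $r_j$ is increasing, and each summand $N_{j'}\bigl(1-(1-r_j/N_j)^{\mu_{(j')}/\mu_{(j)}}\bigr)$ is increasing because $1-r_j/N_j$ is decreasing and raised to a positive power), it equals $0$ at $r_j=0$ and exceeds $r$ as $r_j\to N_j$ (assuming $r\le N$). Hence the theorem reduces \emph{constructively} to a scalar root-finding problem. I do not anticipate a real obstacle in this proof — the heavy lifting was already done in Corollary~\ref{Cor:Lower_condition_uniform}, and what remains is the one-line elimination of $r_{j'}$.
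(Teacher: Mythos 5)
Your proposal is correct and follows essentially the same route as the paper: the paper likewise invokes the equal-latency condition of Corollary~\ref{Cor:Lower_condition_uniform}, reduces it to \eqref{Eq:Mean_equality_condition_uniform}, solves for $r_{j'}$ in terms of $r_j$, and substitutes into the constraint \eqref{Eq:Constraint_group_uniform} to obtain \eqref{Eq:group_coded_equation}. Your added monotonicity remark on existence and uniqueness of the root is a small bonus not present in the paper, but the core argument is identical.
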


The proposed scheme in \cite{Kim19Coded} starts with dividing the data matrix $\mtx{A}$ with $k$ row vectors into $r$ submatrices of the same size. Then $r_j$, the number of submatrices assigned to group $j$, is obtained from (\ref{Eq:group_coded_equation}). The $(N_j, r_j)$ MDS code is used to encode the submatrices assigned to each group and send it to the workers in group $j$, for $j \in [G]$.\footnote{In our modeling, we use an $(n,k)$ MDS code for the entire matrix.} After the master collect $r_j$ computation results from group $j$, the master performs decoding to recover the original $r_j$ submatrices. %이러한 방법을 이용하여, 위에서 언급된 $(n,k)$ MDS의 latency를 asymptotic하게 달성한다는 것이다. 

In general, (\ref{Eq:group_coded_equation}) may not have a solution, if $G > 2$.
For example, If $G = 3$, $r =200$, $(N_1, N_2, N_3) = (100, 200, 300)$, and $(\mu_{(1)}, \mu_{(2)}, \mu_{(3)}) = (3, 2, 1)$, then there is no solution.

\subsection{Load Allocation Scheme in \cite{Reisizadeh19Coded}}%in \cite{Reisizadeh19Coded}}
In \cite{Reisizadeh19Coded}, the authors proposed a load allocation method. In this section, we propose the optimal load allocation method according to the probabilistic model considered in \cite{Reisizadeh19Coded}, which is represented as follows.
The cumulative distribution function of execution time of worker $i$ with a shift parameter, $a_i$, is denoted by
\begin{equation} \label{Eq:Reisizadeh}
\hspace{0.5in} F_{i}^{b}(t) = \Pr(T_i^b \le t) = 1-e^{-\frac{\mu_i}{l_i} (t - \alpha_i l_i)} \,, \hspace{0.5in} \textnormal{ for } t \ge \alpha_i l_i
\end{equation}
%Note that the shift parameter is not considered in our system model. Therefore, the analysis is presented by applying the proposed method to the probability model in \cite{Reisizadeh19Coded}.
%그룹이 G개 있고 l_{j}

We use the same system parameters ($G$, $n$, $k$, $l_{(j)}$, $\alpha_{(j)}$, $\mu_{(j)}$, $N_j$, $r_j$) except for the probabilistic model.
Recall that $r_i$ denotes the number of workers in group $j$ which complete the assigned subtasks when the master receives the $k$ inner products, i.e., $\sum_{j\in [G]} l_{(j)}r_j = k$. %In addition, we define $\alpha_{(j)}$ as a shift parameter in group $j$ for $j \in [G]$. 
The execution time distribution of worker $i$ in group $j$ assigned to calculate the inner product of $l_{(j)}~(< k)$ rows of $\tilde{\mtx{A}}$ with $\mathbf{x}$
is expressed as
\begin{equation} \label{Eq:Reisizadeh_group}
F_{j}^{b}(t)  = 1-e^{-\frac{\mu_{(j)}}{l_{(j)}} (t - \alpha_{(j)} l_{(j)})}% \hspace{0.5in} 
\end{equation}
for $t \ge \alpha_{(j)} l_{(j)}.$
%a가 작은 원소이고 차이를 이야기 해주어야 한다......
%=  \Pr(T_i^a \le t)
Let $\lambda_{r:N}^{b}$ denote $E[T_{r:N}^{b}]$ where $r = \sum_{j\in [G]} r_j$ and $T_{r:N}^{b}$ is the $r$-th order statistic of the $N$ exponential random variables with the distribution function in (\ref{Eq:Reisizadeh_group}). Let $\lambda_{r_j:N_j}^{(b, l_{(j)})}$ also denote $E \left [ T_{r_j:N_j}^{(b, l_{(j)})} \right ]$ where  $T_{r_j:N_j}^{(b, l_{(j)})}$ is the $r_j$-th order statistic of the $N_j$ identical exponential random variables with the distribution function in (\ref{Eq:Reisizadeh_group}) and a shift parameter $\alpha_{(j)}$ and a load allocation $l_{(j)}$.
Then we have $\lambda_{r_j:N_j}^{(b, l_{(j)})} = l_{(j)} \left(\alpha_{(j)} + \frac{1}{\mu_{(j)}} \log \left(\frac{N_j}{N_j - r_j} \right) \right) .$
%\begin{equation*} %\label{Eq:Hetero_mean}
%\lambda_{r_j:N_j}^{(b, l_{(j)})} = l_{(j)} \left(a_{(j)} + \frac{1}{\mu_{(j)}} \log \left(\frac{N_j}{N_j - r_j} \right) \right) .
%\end{equation*}

% 기존에 사용된 $lambda_{r_j, N_j}^{l_{(j)}}$ 와 상수밖에 차이가 안난다. k로 안나눈거 a만큼 시프트 된것... 이차이에 대해서 조금더 써줄 것......

Note that for given $\alpha_{(j)}$, we can apply the same argument used in our probability model. 
Then we have the following corollary to Theorem \ref{Thm:Main}.

% For given Note that we can apply the same argument used in  is , by using Theorem \ref{Thm:Lower_condition}, 
%Then by using the same argument , we have the following corollary.
\begin{corollary}
The optimal load allocation $(\boldsymbol{l}_b^*, \boldsymbol{r}_b^*)$ to achieve the minimum of $\max_{j\in [G]} \left\{ \lambda_{r_j: N_j}^{(b, l_{(j)})} \right\}$, denoted by $T_b^{\star}$, is determined as follows: 
%\begin{align*} %\label{Eq:loadl_j^a}
%r_{b,j}^* = N_j \left( 1+ \frac{1}{W_{-1} ( -e^{-(a_{(j)} \mu_{(j)} +1 )})} \right)
%\end{align*}

\begin{align} \nonumber
& r_{b,j}^* = N_j \left( 1+ \frac{1}{W_{-1} ( -e^{-(\alpha_{(j)} \mu_{(j)} +1 )})} \right) \\
 \label{Eq:Solutionl_j_a^*}
\textnormal{ and } \hspace{0.5in} & l_{b,(j)}^* = \frac{k} {r_{b,j}^* + \sum_{j' \neq j}   r_{b,j'}^*\frac{ \xi ( r_{b,j}^*, N_{j}, \mu_{(j)} ) }{ \xi ( r_{b,j'}^*, N_{j'}, \mu_{(j')} )}}  , 
\end{align}
for $j \in [G]$, where $\xi(r_{b,j}^*,N_j,\mu_{(j)}) = \; \alpha_{(j)} + \frac{1}{\mu_{(j)}} \log \left( - W_{-1} \left( -e^{-(\alpha_{(j)}\mu_{(j)} + 1)} \right) \right)$
%\begin{align*} \nonumber
%\xi(r_{b,j}^*,N_j,\mu_{(j)}) = \; \alpha_{(j)} + \frac{1}{\mu_{(j)}} \log \left( - W_{-1} \left( -e^{-(a_{(j)}\mu_{(j)} + 1)} \right) \right).
%\end{align*}
and $\frac{ r_{b,j}^*}{ \xi ( r_{b,j}^*, N_{j}, \mu_{(j)} )} = -\frac{\mu_{(j)} N_j}{W_{-1} (-e^{-(\alpha_{(j)} \mu_{(j)} + 1 ) })} .$
%\begin{equation*}
%\frac{ r_{b,j}^*}{ \xi ( r_{b,j}^*, N_{j}, \mu_{(j)} )} = -\frac{\mu_{(j)} N_j}{W_{-1} (-e^{-(a_{(j)} \mu_{(j)} + 1 ) })} .
%\end{equation*}
Furthermore, for given $k$, we have the optimal $(n^*, k)$ MDS code, where $n^* = \sum_{j\in [G]} N_j l_{b,(j)}^*  .$ 
%\begin{equation*}
%n^* = \sum_{j\in [G]} N_j l_{b,(j)}^*  .
%\end{equation*}
Then the minimum expected latency, $T_b^{\star}$, is represented as
\begin{equation} \label{Eq:Optimal_latency_lower_a}
T_b^{\star} = 
- \frac{k}{ \sum_{j\in [G]} \frac{\mu_{(j)} N_j }{W_{-1} \left( - e^{- \left( \alpha_{(j)}\mu_{(j)} + 1 \right) } \right) }}  .
\end{equation}
\end{corollary}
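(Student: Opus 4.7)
The plan is to observe that the Reisizadeh runtime distribution in \eqref{Eq:Reisizadeh_group} yields a per-group expected latency $\lambda_{r_j:N_j}^{(b, l_{(j)})} = k \cdot \lambda_{r_j:N_j}^{l_{(j)}}$, i.e., exactly $k$ times the per-group expected latency of our original model \eqref{Eq:Group_expectation}. Since the decoding constraint $\sum_{j \in [G]} r_j l_{(j)} = k$ is a structural property of the MDS code and does not depend on the runtime model, the optimization problem for the Reisizadeh setup is the same as that of Theorem \ref{Thm:Main} up to a constant positive scaling of the objective. This suggests a direct reduction: $(\boldsymbol{l}_b^*, \boldsymbol{r}_b^*) = (\boldsymbol{l}^*, \boldsymbol{r}^*)$ and $T_b^{\star} = k \cdot T^{\star}$.

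To make this rigorous I would first replay the equalization argument of Lemma \ref{Lmm:Lower_conditionG=2} and Theorem \ref{Thm:Lower_condition} verbatim, since those proofs only use strict monotonicity of $\lambda_{r_j:N_j}^{l_{(j)}}$ in $l_{(j)}$ together with the decoding constraint, both of which hold identically for the scaled quantity $\lambda_{r_j:N_j}^{(b,l_{(j)})}$. This yields the equalization condition $l_{(j)} \xi(r_j,N_j,\mu_{(j)}) = l_{(j')} \xi(r_{j'},N_{j'},\mu_{(j')})$ at the optimum. Combined with $\sum_{j \in [G]} r_j l_{(j)} = k$, the same manipulation that produced \eqref{Eq:loadl_j} gives the explicit expression for $l_{b,(j)}^*$ appearing in \eqref{Eq:Solutionl_j_a^*}.

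Next I would substitute the load allocation back into $\max_{j} \lambda_{r_j:N_j}^{(b,l_{(j)})}$ to obtain an objective $f_b(\boldsymbol{r}) = k \cdot f(\boldsymbol{r})$, where $f$ is as in \eqref{Eq:Objective_G}. Strict convexity of $f_b$ on the open rectangle $S = \prod_{j \in [G]} (0, N_j)$ follows immediately from Lemma \ref{Lmm:Convexity_f(r)} because multiplication by a positive constant preserves strict convexity. Consequently the stationarity condition $\partial f_b / \partial r_j = 0$ is equivalent to $\partial f / \partial r_j = 0$, so the unique minimizer is $r_{b,j}^* = r_j^*$ by Lemma \ref{Lmm:r_j^*proof}, matching \eqref{Eq:Solutionl_j_a^*}. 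Evaluating $f_b$ at $\boldsymbol{r}_b^*$ and applying identity \eqref{Eq:r_jstarxi} then produces the closed form \eqref{Eq:Optimal_latency_lower_a} for $T_b^{\star}$.

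The proof is essentially a scaling reduction to Theorem \ref{Thm:Main} and presents no substantive obstacle; the only point that demands explicit verification is that the feasible set, the convexity domain $S$, and the function $\xi(\cdot)$ all depend only on the MDS structure and the group parameters $(N_j, \mu_{(j)}, \alpha_{(j)})$, not on the normalization of the runtime CDF. This is clear by inspection. For completeness I would also remark that the central-order-statistic argument of Section \ref{Subsec:Analysis_lambda} transfers without change, so that under the Reisizadeh model $\lambda_{r:N}^{b}$ likewise converges to $T_b^{\star}$ as $N \to \infty$.
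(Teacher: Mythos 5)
Your proposal is correct and matches the paper's (omitted) proof, which simply asserts that the corollary follows by applying the same argument as Theorem \ref{Thm:Main}. Your explicit observation that $\lambda_{r_j:N_j}^{(b,l_{(j)})} = k\,\lambda_{r_j:N_j}^{l_{(j)}}$ while the MDS feasibility constraint is unchanged---so the entire optimization is a positive constant rescaling of the one already solved, giving $(\boldsymbol{l}_b^*,\boldsymbol{r}_b^*)=(\boldsymbol{l}^*,\boldsymbol{r}^*)$ and $T_b^{\star}=k\,T^{\star}$---is a clean and accurate way of making that deferral precise.
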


\begin{proof}
Proof of this corollary can be done by applying the same argument used in Theorem \ref{Thm:Main}, and thus we omit the proof.
\end{proof}

Reapplying the argument used in Section \ref{Subsec:Add_condition} and \ref{Subsec:Analysis_lambda} , we have $\lambda_{r:N}^{b} \ge  \max_{j\in [G]} \left\{ \lambda_{r_j: N_j}^{(b, l_{(j)})} \right\} \ge \lambda_{r_j^*: N_j}^{(b, l_{(j)}^*)} = T_b^{\star}  . $
%\begin{equation*}
%\lambda_{r:N}^{b} \ge  \max_{j\in [G]} \left\{ \lambda_{r_j: N_j}^{(b, l_{(j)})} \right\} \ge \lambda_{r_j^*: N_j}^{(b, l_{(j)}^*)} = T_b^{\star}  . 
%\end{equation*}
Moreover, for given $(\boldsymbol{l}_b^*, \boldsymbol{r}_b^*)$, we can show that $\lambda_{r:N }^b$ converges to the lower bound $T_b^{\star}$ as $N$ goes to infinity.

Similarly to Remark \ref{remark: gen}, we can also confirm that our finding on the load allocation and expected latency is a generalization of the
result for homogeneous workers under the latency model with a shift parameter as follows.
%	 result under the homogeneous setting in the latency model with a shift parameter as follows.
%	 We can confirm that our result is a generalization of the homogeneous setting  in the system model with a shift parameter as follows.
Consider the case that there are $N_j$ workers in group $j$ for $j \in [G]$.
Assume that $\mu_{(j)} = \mu$ and $\alpha_{(j)} = \alpha$ for $j \in [G]$.
From \eqref{Eq:Solutionl_j_a^*} and \eqref{Eq:Optimal_latency_lower_a}, we have %$l_{b,(j)}^* = \frac{k}{N \left(1 + \frac{1}{W_{-1} (-e^{-(a \mu + 1)})} \right)}$ and
\begin{equation*} \label{Eq:Group1_allocation}
l_{b,(j)}^* = \frac{k}{N \left(1 + \frac{1}{W_{-1} (-e^{-(\alpha \mu + 1)})} \right)}
\end{equation*}
and 
\begin{equation} \label{Eq:Group1}
T_b^{\star} = - \frac{k W_{-1} \left( -e^{- (\alpha \mu + 1) }\right)}{\mu N}  .
\end{equation}
%\begin{align}\nonumber
%& l_{b,(j)}^* = \frac{k}{N \left(1 + \frac{1}{W_{-1} (-e^{-(a \mu + 1)})} \right)}\\ \label{Eq:Group1}
%\textnormal{ and } \hspace{0.5in} & T_b^{\star} = - \frac{k W_{-1} \left( -e^{- (a \mu + 1) }\right)}{\mu N}  .
%\end{align}

Again, the same conclusion can be drawn from the following approach.
%This can be analyzed by another approach. 
The aforementioned assumption that all workers have the same straggling and shift parameters means that there is only one group, which naturally leads the uniform load allocation for all workers. 
%In this case, we use the uniform load allocation. 
The data matrix $\mtx{A}\in\mathbb{R}^{k \times d}$ is uniformly divided into $r$ submatrices, then we apply an $(N, r)$ MDS code to obtain $N$ coded submatrices.
%\cmt{First, the data matrix of size $k$ is divided into $r$ submatrices, then we use an $(N, r)$ MDS code.}  to the each submatrix
In this setting, we obtain $\lambda_{r:N} =  \frac{k}{r} \left( \alpha + \frac{1}{\mu} \log \left( \frac{N}{N - r} \right) \right)  .$
%, i.e., the number of rows assigned to each worker is $l$. 
%
%\begin{equation*} \label{Eq:Homogeneous_a}
%\lambda_{r:N} =  \frac{k}{r} \left( a+ \frac{1}{\mu} \log \left( \frac{N}{N - r} \right) \right)  .
%\end{equation*}

Note that $\lambda_{r:N}$ is a function of $r$ and has a unique minimum on an open interval $(0,N)$. Using the similar calculation in Lemma \ref{Lmm:r_j^*proof}, we obtain $r^*$ which minimizes $\lambda_{r:N}$ as follows:
$
r^* = N \left( 1 + \frac{1}{W_{-1} \left( -e^{-(\alpha \mu +1)} \right) } \right)  .
$
Moreover, we get
$
\lambda_{r^*:N} = \frac{k W_{-1}(-e^{-(\alpha \mu +1 )})}{\mu N}  .
$
%Then the condition for guaranteeing the successful recovery of $\mtx{A}\mathbf{x}$ at the master is given as $$ l r = k$$
%Let $l$ be the amount of task to be assigned on each worker. Then the condition for guaranteeing the successful recovery of $\mtx{A}\mathbf{x}$ at the master is given as $$ l r = k$$
Observe that $T_b^{\star}$ in \eqref{Eq:Group1} is equal to $\lambda_{r^*:N}$. This concludes that our analysis generalizes the analysis done assuming homogeneous workers with the computation time distribution in \eqref{Eq:Reisizadeh}.

The load allocation method proposed in \cite{Reisizadeh19Coded} is located in Appendix \ref{App:Coded}.

\section{Simulation Results}
%이 섹션에서는 다음과 같은 실험을 수행한다. 다음으로 

% 실험으로 보일 것...
%
In this section, we provide comparisons between the expected latency of the proposed load allocation and existing schemes. 
We confine our interest to a certain range of $\mu_{(j)}$ based on the following observations.
For a large $\mu_{(j)}$, we have $W_{-1}(-e^{-(\alpha_{(j)} \mu_{(j)} +1 )}) = -\infty .$
Consider the case where all workers have the same straggling parameters, as in \cite{Lee18Speeding}. The expected latency is then expressed as \eqref{Eq:Group1}. That is, given a sufficiently large straggling parameter, $T^{\star}$ becomes infinite which means that the analysis of the expected latency through this model is not appropriate.
We, thus, perform the evaluation only for the range of $\mu_{(j)}  (< 750)$ for $j\in [G]$.

%However, given a sufficiently large straggling parameter, it follows from \eqref{Eq:Group1_allocation} that $n^* = k$, i.e., the code rate $\frac{k}{n*} $ is equal to 1. In other words, $T^{\star}$, the expected latency, is expressed as the expected latency with the uncoded scheme.

%we provide comparisons between the expected latency of the proposed load allocation and existing schemes. 
%We confine our interest to a certain range of $\mu_{(j)}$ based on the following observations.
%For a large $\mu_{(j)}$, we have $$W_{-1}(-e^{-(\mu_{(j)} +1 )}) = -\infty  .$$ 
%From (\ref{Eq:Optimal_latency_lower}), $T^{\star}$, the lower bound of $\lambda_{r:N}$, thus grows to infinity. 
%Using (\ref{Eq:Solutionl_j^*}) and (\ref{Eq:Constraints_n}), 
%we have a load allocation $l_{(j)} = \frac{k}{N_j}$, which does not include the straggling parameters of the workers in each group.
%We, thus, perform the evaluation only for the range of 
%$\mu_{(j)}   (< 750)$ for $j\in [G]$. 

%%%%%%%%%%%%figure
\begin{figure*}[t]
\centering
\begin{multicols}{2}
\includegraphics[width=1.02\linewidth]{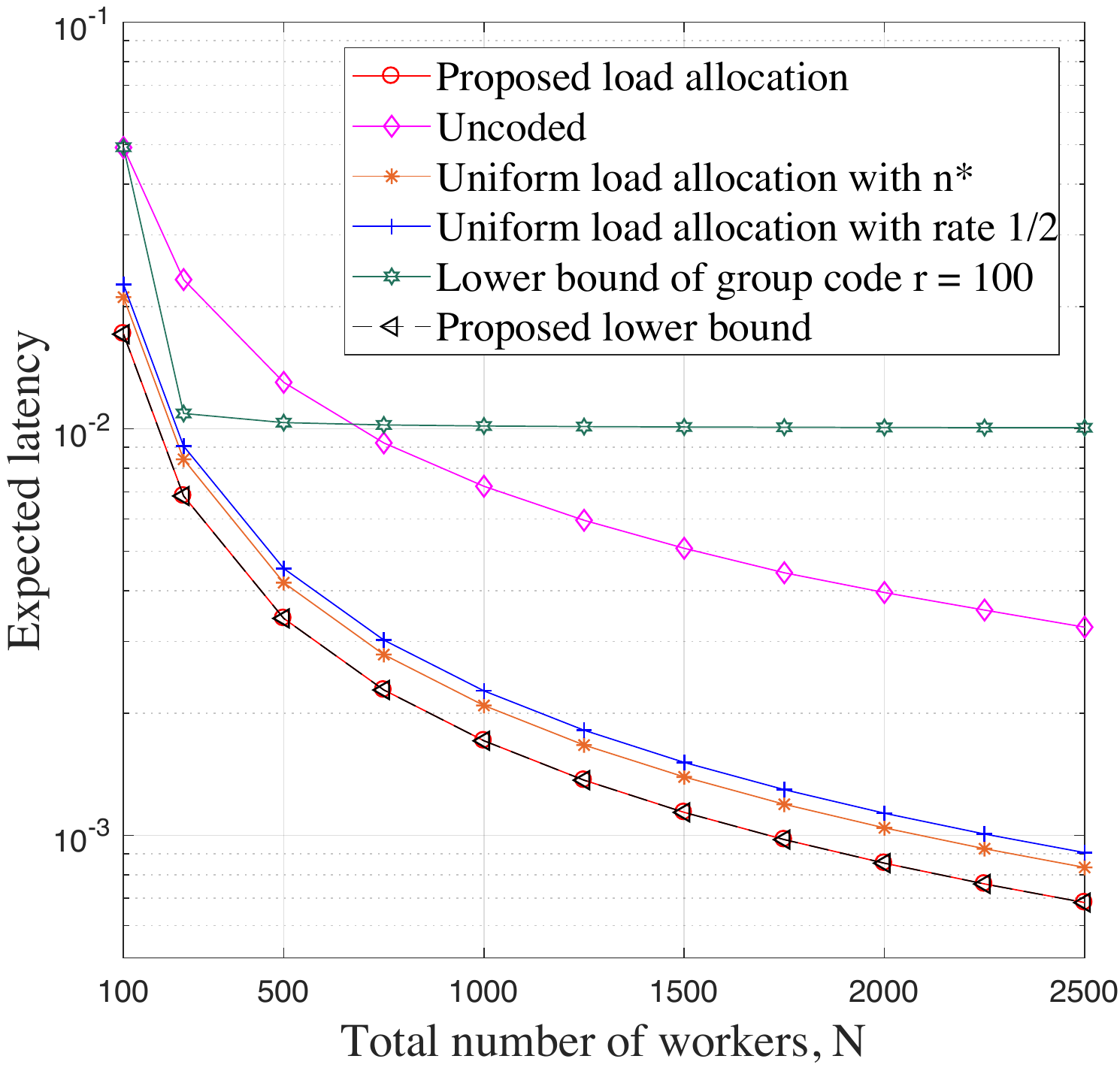} \par\caption{Expected latency comparison between the proposed load allocation, uncoded, uniform load allocation with $n^*$, uniform load allocation with rate $\frac{1}{2}$, the lower bound of group code in \cite{Kim19Coded}, and the proposed lower bound with five groups.}
	\label{Fig:Group5}
\includegraphics[width=0.94\linewidth]{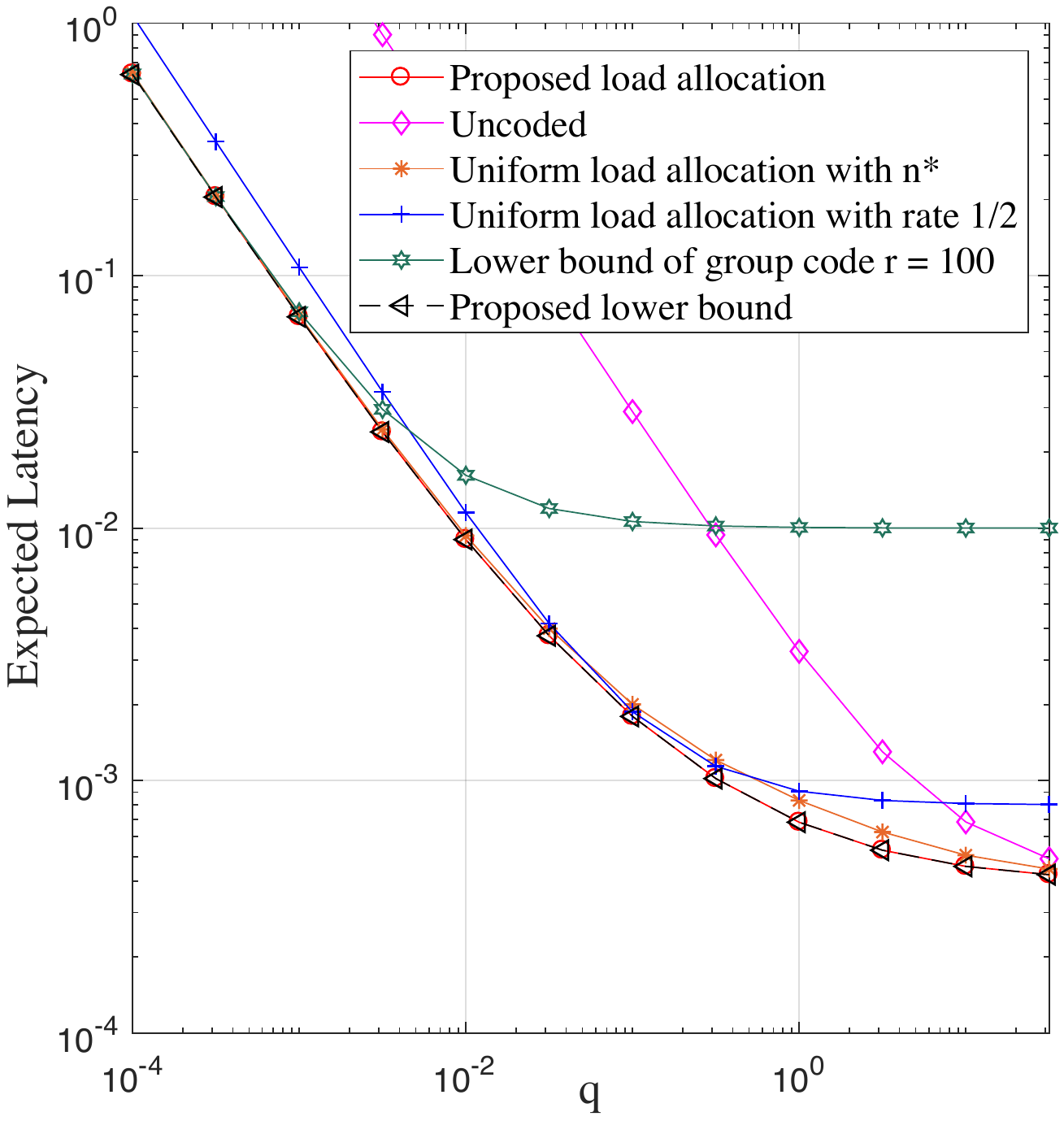} \par\caption{Expected latency comparison between the proposed load allocation, uncoded, uniform load allocation with $n^*$, uniform load allocation with rate $\frac{1}{2}$, the lower bound of group code in \cite{Kim19Coded}, and the proposed lower bound with five groups, according to the scale of $\boldsymbol{\mu}$, denoted by $q$.}
	\label{Fig:Mu_vs_schemes}
\end{multicols}
\vspace{-0.1in}
\end{figure*}

Numerical simulations are carried out using the Monte Carlo method with $10^4$ samples.
In Fig. \ref{Fig:Group5}, we consider the scenario in which workers are formed into five groups.
%First, in Fig. \ref{Fig:Group2}, we set $\boldsymbol{N} = (N_1, N_2) = (\frac{2}{5}N, \frac{3}{5}N) $, $(\mu_1, \mu_2 )= (5,1)$, and $r = 100$. Next, Section \ref{Subsubsec:Uniform})
In Fig. \ref{Fig:Group5}, we set $\boldsymbol{N} = (N_1, N_2, N_3, N_4, N_5) = \frac{1}{25}(3N, 4N,5N,6N,7N) $, $(\mu_{(1)}, \mu_{(2)}, \mu_{(3)}, \mu_{(4)}, \mu_{(5)}  )= (16,12,8,4,1)$, and $r = 100$. In addition, we set $(\alpha_{(1)}, \alpha_{(2)}, \alpha_{(3)},\alpha_{(4)}, \alpha_{(5)}) = (1,1,1,1,1)$ since the scheme in Theorem \ref{Thm:group_coded_equation}  cannot be plotted if $\alpha_{(j)} \neq \alpha_{(j')}$ for $j \neq j' \in [G]$. The result in Fig. \ref{Fig:Group5} shows that the proposed load allocation method achieves the lower bound $T^{\star}$. 
The proposed load allocation shows a 10x or more performance gain over the MDS code with fixed $r$ which is considered as a lower bound in \cite{Kim19Coded} as $N$, the total number of workers, increases.
The reason why the expected latency with fixed $r$ converges to some value (at $10^{-2} = \frac{1}{r}$ in Fig. \ref{Fig:Group5}) without decreasing despite the increase of $N$ is that the load allocation $l_{(j)}$ is constant $\frac{k}{r}$. In other words, even though the total number of workers increases, the amount of task assigned to each worker is constant. It follows that the expected latency cannot continue to decrease.

Next, we compare the proposed load allocation with a uniform load allocation method. For uniform load allocation, $n$ has to be selected first. The simulation in Fig. \ref{Fig:Group5},  is performed with $n$ as $n^*$ and $2k$. 
The uncoded scheme, which takes into account the case where $n = k$, is the uniform load allocation.
Despite using the same $(n^*, k)$ MDS code, the proposed load allocation method has a 18\% lower latency than the uniform load allocation does.

\begin{figure*}[t]
\centering
\begin{multicols}{2}
\includegraphics[width=1.0\linewidth]{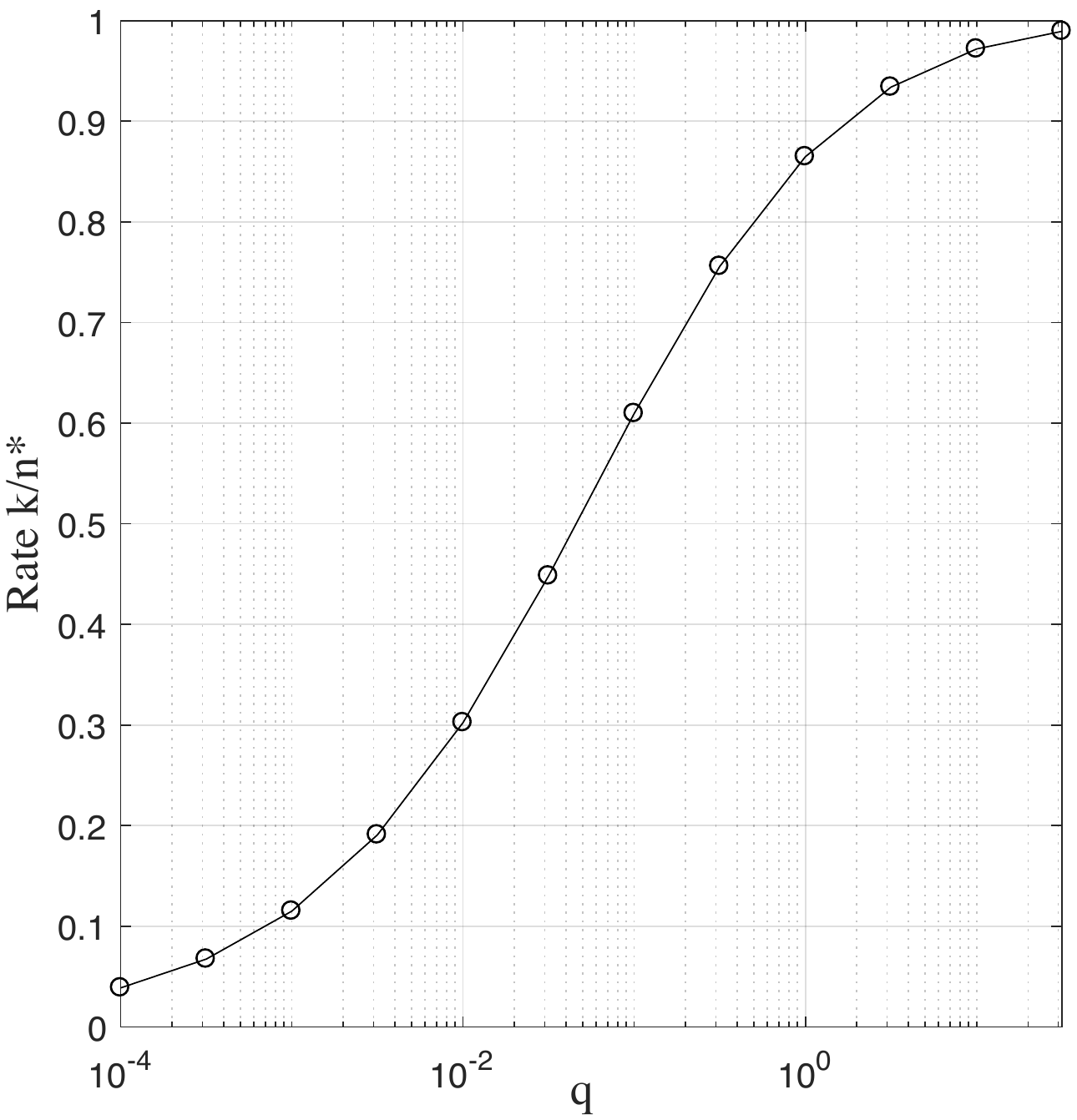} \par\caption{Rate as a function of $q$ (the scale of $\boldsymbol{\mu}$).}
	\label{Fig:Rate_proposed}
\includegraphics[width=1\linewidth]{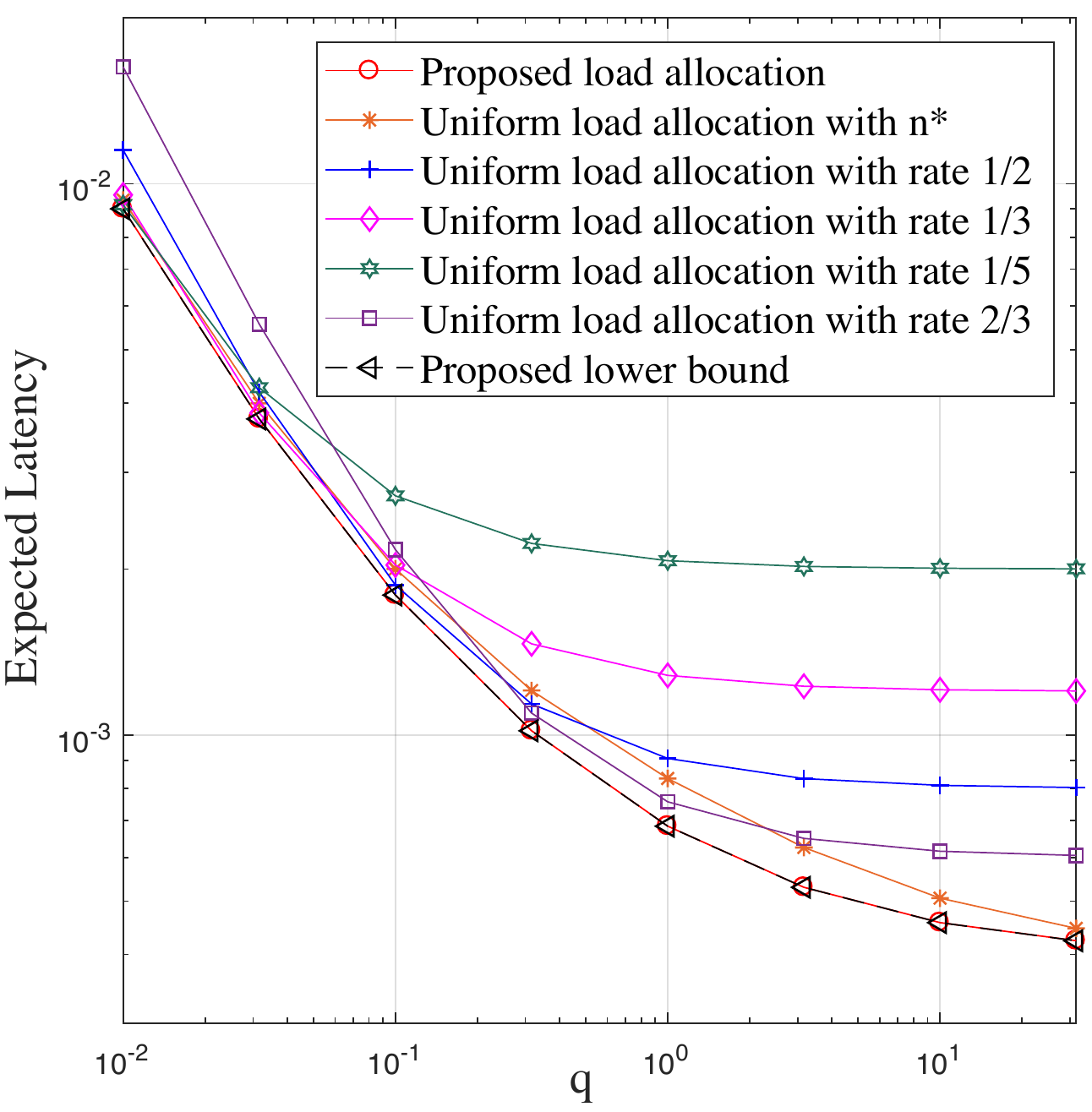} \par\caption{Expected latency  between the proposed load allocation and uniform load allocation with various rates according to $q$ (the scale of $\boldsymbol{\mu}$).}
	\label{Fig:Proposed_vs_uniform}
\end{multicols}
\vspace{-0.1in}
\end{figure*}

Another numerical simulation is conducted to observe the effect of the scale factor of $\boldsymbol{\mu}$, $q$, on the expected latency of each scheme and to find the lowest limit on the expected latency of the uniform load allocation with an $(n,k)$ MDS code.
In this simulation, the same setting as the previous simulation in Fig. \ref{Fig:Group5} is used, except $N$ which is fixed to 2500.

The expected latency according to the change in $q$, the scale of $\boldsymbol{\mu}$, is depicted in Fig. \ref{Fig:Mu_vs_schemes}. If $q \le 10^{-2}$, then the uniform load allocation with $n^*$ appears to achieve the proposed lower bound. In the region $[10^{-1.5} , \; 10^{-1}]$, there is a tendency that the uniform load allocation with rate $\frac{1}{2}$ has a relatively lower expected latency than the other schemes, except for the proposed load allocation. On the other hand, in the regions excluding $[10^{-1.5} , \; 10^{-1}]$, it has a relatively high expected latency compared to the lower bound.
In addition, the result shows that the expected latency of an uncoded scheme using the rate 1 uniform load allocation method approaches the proposed lower bound $T^{\star}$ as $q$ increases to $10^{1.5}$. Based on these observations, we conducted the following experiments with the assumption that the rate of MDS code would have a significant impact on the expected latency.

%\begin{figure}[t]
%	\centering{
%%\vspace{-0.15in}
%  \includegraphics[width=0.5\linewidth]{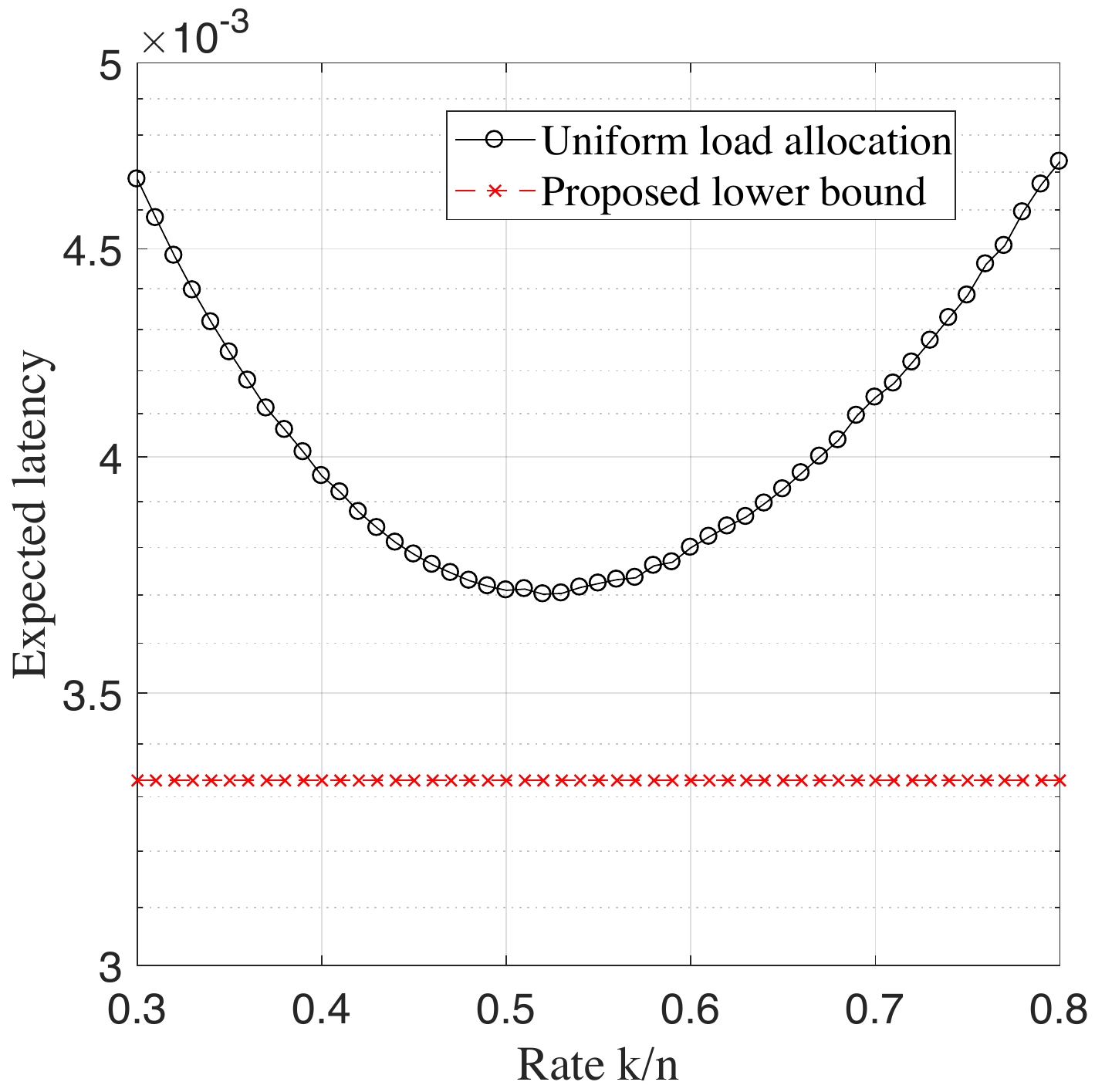} }
%\caption{Expected latency according to change of rate with uniform load allocation.}
%	\label{Fig:Proposed_vs_uniform_900workers}
%\vskip -0.2in
%\vspace{-0.2in}
%\end{figure}
%
%
%\begin{figure}[t]
%	\centering{
%\vspace{-0.15in}
% \includegraphics[width=0.5\linewidth]{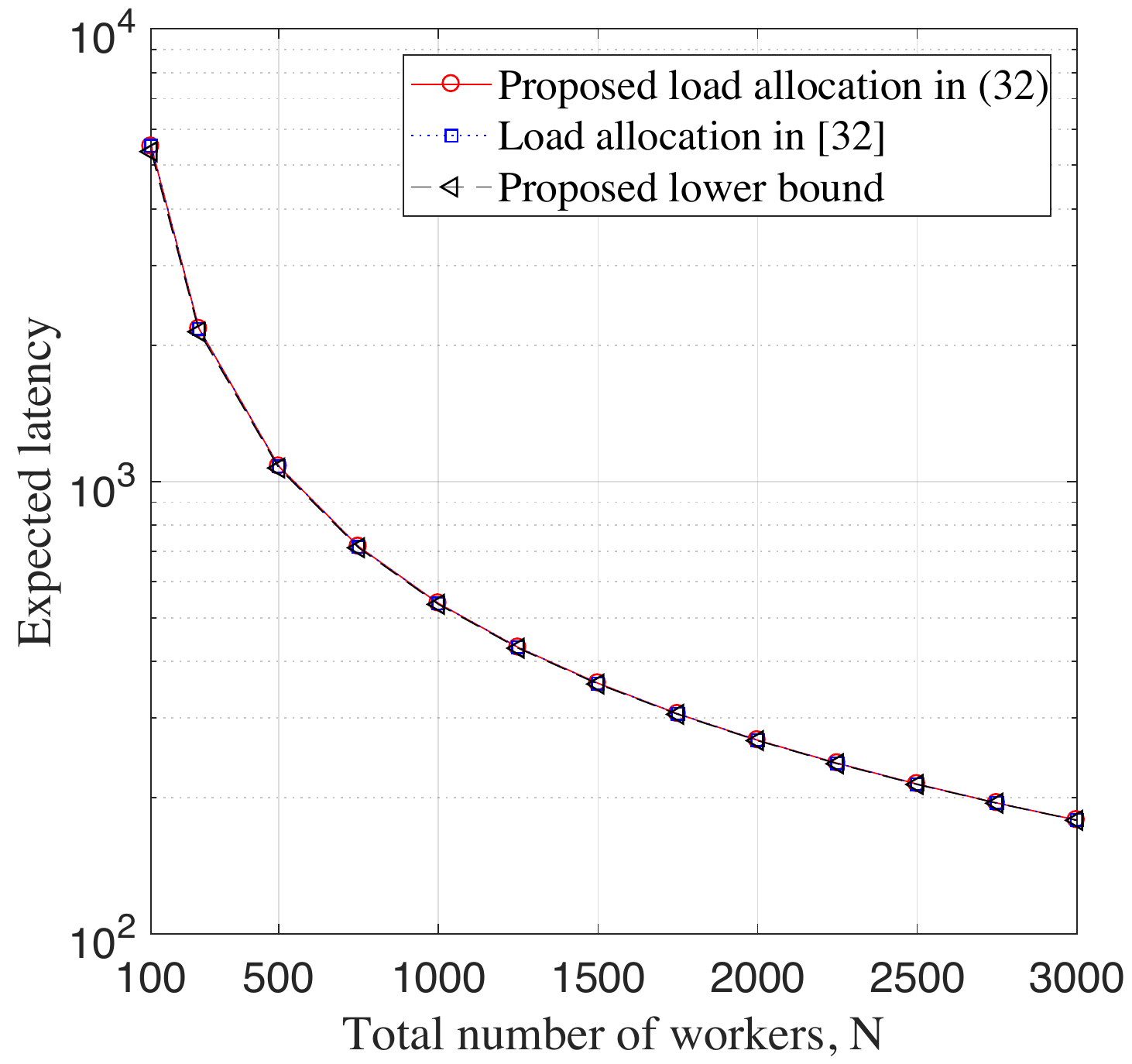} }
%\caption{Comparison between the proposed load allocation in \eqref{Eq:Solutionl_j_a^*} and the load allocation algorithm in \cite{Reisizadeh19Coded} with three groups.}
%	\label{Fig:Proposed_hetero3}
%\vskip -0.2in
%\vspace{-0.2in}
%\end{figure}

Fig. \ref{Fig:Rate_proposed} shows the rate $\frac{k}{n^*}$ as a function of $q$, the scale of $\boldsymbol{\mu}$, with $\boldsymbol{N}$ as in the previous setting of Fig. \ref{Fig:Mu_vs_schemes}. In the region $[10^{-1.5} , \; 10^{-1}]$, the rate is close to $\frac{1}{2}$, and the rate is almost $0.99$ when $q = 10^{1.5}$. 
The following simulations are done in region $[10^{-2}, \; 10^{1.5}]$, since the $(n^*, k)$ MDS code achieves the proposed lower bound $T^{\star}$ if $q \le 10^{-2}$.
The simulation results under the uniform load allocation show that an $(n, k)$ MDS code can exist to have an expected latency lower than the expected latency of a system using the $(n^*, k)$ MDS code if we use an $(n, k)$ MDS code close to the optimal rate $\frac{k}{n^*}$.

Therefore, the effects of various rates on the expected latency of the uniform load allocation is depicted in Fig. \ref{Fig:Proposed_vs_uniform}.
The simulation result shows that when $q = 1$, the MDS code with rate $\frac{2}{3}$ has a lower expected latency than that of the optimal $(n^*, k)$ MDS code under the uniform load allocation. In addition, in Fig. \ref{Fig:Proposed_vs_uniform_900workers}, the simulation is performed under the condition that there are two groups with parameter $\boldsymbol{N} = (N_1, N_2) = (300,600 )$,  $\boldsymbol{\mu} = (\mu_{(1)}, \mu_{(2)}) = (4, 0.5)$, and $\boldsymbol{\alpha} = (\alpha_{(1)}, \alpha_{(2)}) = (1, 1)$. 
In the simulation, we have found that the lowest expected latency is achieved when the rate is near $0.52$ under the uniform load allocation.
Given the same parameters, the proposed load allocation shows a $10 \%$ reduction in the expected latency compared to the uniform load allocation with rate $0.52$.

%\begin{figure}[t]
%	\centering{
%\vspace{-0.15in}
%	\includegraphics[width=0.85\linewidth]{Figures/S6_rate_proposed}}
%	\caption{Rate as a function of $q$ (the scale of $\boldsymbol{\mu}$).}
%	\label{Fig:Rate_proposed}
%\vspace{-0.2in}
%\end{figure}

%\begin{figure}[t]
%	\centering{
%\vspace{-0.15in}
%	\includegraphics[width=0.88\linewidth]{Figures/S7_Proposed_vs_uniform}}
%	\caption{Expected latency  between the proposed load allocation and uniform load allocation with various rates according to $q$ (the scale of $\boldsymbol{\mu}$).}
%	\label{Fig:Proposed_vs_uniform}
%\vspace{-0.2in}
%\end{figure}

%\begin{figure}[t]
%	\centering{
%\vspace{-0.15in}
%	\includegraphics[width=0.85\linewidth]{Figures/S8_Proposed_vs_uniform_two_group_900workers}}
%	\caption{Expected latency according to change of rate with uniform load allocation.}
%	\label{Fig:Proposed_vs_uniform_900workers}
%\vspace{-0.2in}
%\end{figure}

%\begin{figure}[t]
%	\centering{
%\vspace{-0.15in}
%	\includegraphics[width=0.85\linewidth]{Figures/S10_Proposed_vs_hetero3}}
%	\caption{Comparison between the proposed load allocation with shift parameter and the load allocation algorithm in \cite{Reisizadeh19Coded} with three groups.}
%	\label{Fig:Proposed_hetero3}
%\vspace{-0.2in}
%\end{figure}

% k는 다 캔슬된다. 왜냐하면 l/k가 비율에 대한 문제이다... 따라서 다 캔슬된다.
We now proceed with the simulation for the probability model with a shift parameter. 
From (\ref{Eq:Reisizadeh}), we have $E[T_i^{a}] = l_{(j)} \left( \alpha_{(j)} + \frac{1}{\mu_{(j)}} \right)  .$
%\begin{equation*}
%E[T_i^{a}] = l_{(j)} \left( \alpha_{(j)} + \frac{1}{\mu_{(j)}} \right)  .
%\end{equation*}
This means that $k$ is a scaling factor that affects the expected latency in the probability model with a shift parameter. 
The following simulations are conducted with $k$ fixed to $10^5$.
%In Fig. \ref{Fig:Proposed_hetero2}, we set $\boldsymbol{N} = (\frac{3}{10}N, \frac{3}{5}N)$, $\boldsymbol{\mu} = (5 , 1) $, and $\boldsymbol{\alpha} = (1, 4)$. 
In Fig. \ref{Fig:Proposed_hetero3}, we set $\boldsymbol{N} = \frac{1}{10}(3N, 3N, 4N)$, $\boldsymbol{\mu} = (1, 4, 8) $, and  $\boldsymbol{\alpha} = (1, 4, 12)$. In this simulation, we observe that the proposed load allocation with a shift parameter achieves the lower bound $T_a^{\star}$. This result is consistent with the result of \cite{Reisizadeh19Coded} which is known to be an optimal load allocation scheme.

\begin{figure*}[t]
\centering
\begin{multicols}{2}
\includegraphics[width=1\linewidth]{Figures/S8_Proposed_vs_uniform_two_group_900workers} \par\caption{Expected latency according to change of rate with uniform load allocation.}
	\label{Fig:Proposed_vs_uniform_900workers}
\includegraphics[width=1\linewidth]{Figures/S10_Proposed_vs_hetero3} \par\caption{Comparison between the proposed load allocation in \eqref{Eq:Solutionl_j_a^*} and the load allocation algorithm in \cite{Reisizadeh19Coded} with three groups.}
	\label{Fig:Proposed_hetero3}
\end{multicols}
\vspace{-0.1in}
\end{figure*}

%본 논문에서는, heterogeneous clusters에서 MDS 부호를 사용했을 때, 새로운 load allocation 방법을 제안한다. % 제안된 방법은 
%heterogeneous cluster에서 lower bound를 
\section{Conclusion}
In this paper, we proposed the optimal load allocation
for distributed matrix-vector multiplication in heterogeneous group
clusters. Specifically, we established a lower bound on the expected latency and obtained the optimal load allocation
by showing that our proposed load allocation achieves the minimum of the lower bound, which is shown to be the theoretical limit.
Along with the proposed load allocation, the optimal design of the $(n, k)$ MDS code is obtained.
The optimal load allocation in our setting is a generalization of the result in \cite{Lee18Speeding}.
From numerical evaluations, it is shown that the proposed load allocation provides a 10x reduction in expected latency compared to the existing scheme. 

%\section*{Acknowledgment}
%This work was supported by Institute of Information $\&$ Communications Technology Planning $\&$ Evaluation (IITP) grant funded by the Korea government (MSIT) (No. 2018-0-00691, Development of Autonomous Collaborative Swarm Intelligence Technologies for Disposable IoT Devices).

% if have a single appendix:
%\appendix[Proof of the Zonklar Equations]
% or
%\appendix  % for no appendix heading
% do not use \section anymore after \appendix, only \section*
% is possibly needed

% use appendices with more than one appendix
% then use \section to start each appendix
% you must declare a \section before using any
% \subsection or using \label (\appendices by itself
% starts a section numbered zero.)
%

\appendices

\section{Derivation of equation \eqref{Eq:Group_expectation}} \label{App:Group_Expectation}

Let $f_{T_{r_j:N_j}^{l_{(j)}}} (t) $ denote the pdf of $T_{r_j:N_j}^{l_{(j)}}$, i.e., 
$$
f_{T_{r_j:N_j}^{l_{(j)}}} (t) = N_j  f_{j} \left ( t \right )  \binom{N_j -1}{r_j -1} F_{j} \left (t \right )^{r_j -1} \left [1 - F_{j} \left ( t \right ) \right ]^{N_j - r_j} , 
$$
where $f_j(t) = F_j'(t)$ and $t \ge \frac{\alpha_{(j)} l_{(j)}}{k}$.
Then 
\begin{align} \label{Eq:APP1}
\lambda_{r_j:N_j}^{l_{(j)}} & =  \int_{\tau = \frac{\alpha_{(j)} l_{(j)}}{k}}^{\infty} \tau f_{T_{r_j:N_j}^{l_{(j)}}} (\tau) d\tau .
\end{align}
Let $\frac{k}{l_{(j)}} \tau - \alpha_{(j)}  = x$.  This gives that $\frac{k}{l_{(j)}}d\tau = dx$.
Then we have
\begin{align*}
\lambda_{r_j:N_j}^{l_{(j)}} & =  \int_{x = 0}^{\infty} \frac{l_{(j)}}{k} (\alpha_{(j)}+x) f_{T_{r_j:N_j}^{l_{(j)}}} \left (\frac{l_{(j)}}{k} (x + \alpha_{(j)}) \right ) \frac{l_{(j)}}{k} dx.
\end{align*}
We easily check that 
$$
f_j \left (\frac{l_{(j)}}{k} (x + \alpha_{(j)}) \right ) = \frac{k\mu_{(j)}}{l_{(j)}} e^{-\mu_{(j)} x} \hspace{0.2in}\textnormal{ and } \hspace{0.2in} F_j \left (\frac{l_{(j)}}{k} (x + \alpha_{(j)}) \right ) = 1-e^{-\mu_{(j)} x}.
$$

Let $f(x) = \mu_{(j)} e^{-\mu_{(j)} x}$ and $F(x) = 1-e^{-\mu_{(j)} x}$.
Then \eqref{Eq:APP1} is rewritten as 
\begin{align*}
 \lambda_{r_j:N_j}^{l_{(j)}}  =  \int_{x = 0}^{\infty} \frac{l_{(j)}}{k} (\alpha_{(j)} + x) N_j f \left ( x \right )  \binom{N_j -1}{r_j -1} F \left ( x \right )^{r_j -1} \left [1 - F(x) \right ]^{N_j - r_j} dx .
\end{align*}
Therefore, we obtain 
$$
 \lambda_{r_j:N_j}^{l_{(j)}} = \frac{l_{(j)}}{k} \left (\alpha_{(j)} + \frac{\mathcal{H}(N_j ) - \mathcal{H}(N_j - r_j)}{\mu_{(j)}} \right )
$$
as desired.

%
%Let $ f_{T_{r_j:N_j}^{l_{(j)}}} (t)  = N_j \frac{k}{l_{(j)}} f_{j} \left ( \frac{kt}{l_{(j)}} \right )  \binom{N_j -1}{r_j -1} F_{j} \left ( \frac{kt}{l_{(j)}} \right )^{r_j -1} \left [1 - F_{(j)}(\frac{kt}{l_{(j)}}) \right ]^{N_j - r_j}
%.$
%\begin{align*}
% f^{MDS}_j (t)  = N_j \frac{k}{l_{(j)}} f_{(j)} \left ( \frac{kt}{l_{(j)}} \right )  \binom{N_j -1}{r_j -1} F_{(j)} \left ( \frac{kt}{l_{(j)}} \right )^{r_j -1} \left [1 - F_{(j)}(\frac{kt}{l_{(j)}}) \right ]^{N_j - r_j}
%\end{align*}
%Then 
%\begin{align*}
%\lambda_{r_j:N_j}^{l_{(j)}} & =  \int_{\tau = \frac{k}{l_{(j)}}}^{\infty} \tau f^{MDS}_j (\tau) d\tau .
%\end{align*}
%
%Let $\tau - \frac{a_{(j)} l_{(j)}}{k} = x$
%
%
%
%Let $\frac{k}{l_{(j)}} \tau - 1  = x$. This implies $\frac{k}{l_{(j)}}d\tau = dx$.
%Then we have
%\begin{align*}
%\lambda_{r_j:N_j}^{l_{(j)}} & =  \int_{x = 0}^{\infty} \frac{l_{(j)}}{k} (1+x) f_j^{MDS}\left (\frac{l_{(j)}}{k} (x+1) \right ) \frac{l_{(j)}}{k} dx
%\end{align*}
%Here
%\begin{align*}
%f_j^{MDS}(x) = N_j \frac{k}{l_{(j)}} f \left ( x \right )  \binom{N_j -1}{r_j -1} F \left ( x \right )^{r_j -1} \left [1 - F(x) \right ]^{N_j - r_j}
%\end{align*}
%It is easy to verify that 
%$ f(x) := f_j (x+1) = e^{-\mu x}.$
%
%Then we have
%\begin{align*}
% \lambda_{r_j:N_j}^{l_{(j)}} & =  \int_{x = 0}^{\infty} \frac{l_{(j)}}{k} (1+x) N_j f \left ( x \right )  \binom{N_j -1}{r_j -1} F \left ( x \right )^{r_j -1} \left [1 - F(x) \right ]^{N_j - r_j} dx\\
%& = \frac{l_{(j)}}{k} \left (1 + \frac{\mathcal{H}(N_j ) - \mathcal{H}(N_j - r_j)}{\mu_{(j)}} \right )
%\end{align*}

\section{Proof of Theorem \ref{Thm:Lower_condition}} \label{APP:Theorem1}
\begin{proof}
Without loss of generality, assume that $\lambda_{r_1^*: N_1}^{l_{(1)}^*} \ge \lambda_{r_2^*: N_2}^{l_{(2)}^*} \ge \dots \ge \lambda_{r_G^*: N_G}^{l_{(G)}^*}  .
$
%\begin{equation*}
%\lambda_{r_1^*: N_1}^{l_{(1)}^*} \ge \lambda_{r_2^*: N_2}^{l_{(2)}^*} \ge \dots \ge \lambda_{r_G^*: N_G}^{l_{(G)}^*}  .
%\end{equation*}
Let $E$ be the cardinality of the set $e$, denoted by $|e|$, where $$e = \left\{ j'\in [G] : \lambda_{r_{j'}^* : N_{j'} }^{l_{(j')}^*} = \max_{j\in [G]} \left\{ \lambda_{r_j^*: N_j}^{l_{(j)}^*} \right\}  \right\}  .$$
%\begin{equation*}
%e = \left\{ j'\in [G] : \lambda_{r_{j'}^* : N_{j'} }^{l_{(j')}^*} = \max_{j\in [G]} \left\{ \lambda_{r_j^*: N_j}^{l_{(j)}^*} \right\}  \right\}  .
%\end{equation*}
Assume that $E < G$. Then there exists $h \in [G]\backslash [E]$ such that 
\begin{equation*}
\epsilon  := \lambda_{r_{j'}^* : N_{j'} }^{l_{(j')}^*} - \lambda_{r_{h}^* : N_{h} }^{l_{(h)}^*} > 0, \hspace{0.5in} \textnormal{ for } j'\in [E]  .
\end{equation*}
Note that the pair $(\boldsymbol{l}^*, \boldsymbol{r}^*)$ satisfies the constraint $\sum_{j\in [G]} r_j^* l_{(j)}^*  = k$.
For fixed $\boldsymbol{r}^*$, the following procedure is repeated $E$ times to obtain a pair $\left( \bar{l}_{(s)}, \bar{l}_{(h)}^{(s)} \right)$ in ascending order with respect to $s$, for $s \in [E]$, such that $$\lambda_{r_h^*: N_h}^{\bar{l}_{(h)}^{(s)}} = \lambda_{r_h^*: N_h}^{\bar{l}_{(h)}^{(s - 1)}} + \frac{\epsilon}{2E}, $$
%Let $\bar{l}_{(h)}^{(t)}$ satisfy the following equation in serial order with respect to $t$:
%\begin{equation*}
%\lambda_{r_h^*: N_h}^{\bar{l}_{(h)}^{(s)}} = \lambda_{r_h^*: N_h}^{\bar{l}_{(h)}^{(s - 1)}} + \frac{\epsilon}{2E}, 
%\end{equation*}
where
\begin{equation*}
\bar{l}_{(j)}^{(0)} = l_{(j)}^* \hspace{0.2in} \textnormal{ and } \hspace{0.2in} \bar{l}_{(s)} = \frac{k - \sum_{j\in [s-1]}r_j^* \bar{l}_{(j)} - \sum_{j\in [G]\backslash ([t] \cup \{h\} )}  r_j^* l_{(j)}^* - r_h^* \bar{l}_{(h)}^{(s)} }{r_s^*} .
\end{equation*}

%We denote $\bar{l}_{(1)}$
%\begin{equation*}
%\bar{l}_{(s)} = \frac{k - \sum_{j\in [s-1]}r_j^* \bar{l}_{(j)} - \sum_{j\in [G]\backslash ([t] \cup \{h\} )}  r_j^* l_{(j)}^* - r_h^* \bar{l}_{(h)}^{(s)} }{r_s^*} .
%\end{equation*}
Then get a vector $\hat{\boldsymbol{l}} = (\hat{l}_{(1)}, \hat{l}_{(2)}, \dots, \hat{l}_{(G)})$, where
\begin{eqnarray*}
\hat{l}_{(j)} = \left \{
                 \begin{array}{llll}
                 \displaystyle 
             \bar{l}_{(j)}, & & \hbox{\textnormal{if } $ j \in [E]$ ,}& \\
                   \bar{l}_{(h)}^{(E)}, & & \hbox{\textnormal{if } $j = h$ ,}& \\
                    l_{(j)}^*, & & \hbox{\textnormal{if } $j \in [G]\backslash ([E]\cup \{h \})$ .}& 
                  \end{array}
                \right.
\end{eqnarray*}
The existence of $\hat{\boldsymbol{l}}$ is the contradiction to the the assumption that $(\boldsymbol{l}^*, \boldsymbol{r}^*)$ achieves the minimum of $\max_{j\in [G]} \left\{ \lambda_{r_j: N_j}^{l_{(j)}} \right\}$.
\end{proof}

\section{Proof of the concavity of $g_j(\boldsymbol{r})$ in Lemma \ref{Lmm:Convexity_f(r)}} \label{App:g_j}
\begin{proof}
It suffices to show that %$a(x) = \frac{x}{1 + \frac{1}{\mu} \log \left(\frac{N}{N - x} \right)}$
\begin{equation*}
h(x) = \frac{x}{\alpha + \frac{1}{\mu} \log \left(\frac{N}{N - x} \right)}
\end{equation*}
is a concave function with respect to $x$ on the interval $[0, N)$.
The second derivative of $h(x)$ is 
\begin{equation} \label{Eq:Sec_derivative_h}
-\frac{\mu \left( \alpha \mu(2 N- x) +(2N-x) \log \left( \frac{N}{N-x} \right) -2x \right)}{(N-x)^2 \left( \alpha \mu + \log \left( \frac{N}{N-x} \right) \right)^3}  .
\end{equation}
Since $ \alpha \mu(2 N - x)$ and the denominator in (\ref{Eq:Sec_derivative_h}) are positive, it suffices to show that 
\begin{equation*}
v(x) := (2N-x) \log \left( \frac{N}{N-x} \right) -2x \ge 0  .
\end{equation*}
We easily check that $v'(x) =  \frac{x}{N-x} - \log \left( \frac{N}{N-x} \right) \ge 0 $
%\begin{equation*}
%v'(x) =  \frac{x}{N-x} - \log \left( \frac{N}{N-x} \right) \ge 0 
%\end{equation*}
on  $[0, N)$ and $v(0) = 0$.
\end{proof}

\section{Load allocation method proposed in \cite{Reisizadeh19Coded}} \label{App:Coded}

Let  $\tilde{l}_{(j)}$ denote the load allocation proposed in \cite{Reisizadeh19Coded}. Then  $\tilde{l}_{(j)} = \frac{k}{s \delta_{(j)} },$
%\begin{equation*}
%\tilde{l}_{(j)} = \frac{k}{s \delta_{(j)} },
%\end{equation*}
where
\begin{equation*}
 s = \sum_{i \in [N]} \frac{\mu_i}{1 + \mu_i \delta_i} \hspace{0.2in} \textnormal{ and } \hspace{0.2in} \delta_{(j)} = - \frac{W_{-1} (- e^{-( \alpha_{(j)} \mu_{(j)} + 1)}) + 1 }{ \mu_{(j)}} \hspace{0.2in} \textnormal{ for } j\in [G] .
\end{equation*}
%and
%\begin{equation*}
%\delta_{(j)} = - \frac{W_{-1} (- e^{-( a_{(j)} \mu_{(j)} + 1)}) + 1 }{ \mu_{(j)}} \hspace{0.5in} \textnormal{ for } j\in [G] .
%\end{equation*}
Depending on our system model, $s$ can be rewritten as $$s = \sum_{j\in [G]} N_j \left( \frac{\mu_{(j)}}{1 + \mu_{(j)}\delta_{(j)}} \right)  . $$
%\begin{equation*}
%s = \sum_{j\in [G]} N_j \left( \frac{\mu_{(j)}}{1 + \mu_{(j)}\delta_{(j)}} \right)  . 
%\end{equation*}
Thus, we have $$\tilde{l}_{(j)} = \frac{k}{ \delta_{(j)} \sum_{j'\in [G]} N_{j'} \left( \frac{\mu_{(j')}}{1 + \mu_{(j')}\delta_{(j')}} \right)} .$$
%\begin{equation*}
%\tilde{l}_{(j)} = \frac{k}{ \delta_{(j)} \sum_{j'\in [G]} N_{j'} \left( \frac{\mu_{(j')}}{1 + \mu_{(j')}\delta_{(j')}} \right)} .
%\end{equation*}
We use an $(\tilde{n},k)$ MDS codes for this load allocation, where $\tilde{n} = \sum_{j\in [G]} N_j \tilde{l}_{(j)} .$
% you can choose not to have a title for an appendix
% if you want by leaving the argument blank

% use section* for acknowledgment

%
%
%The authors would like to thank...

% Can use something like this to put references on a page
% by themselves when using endfloat and the captionsoff option.
\ifCLASSOPTIONcaptionsoff
  \newpage
\fi

\bibliographystyle{IEEEtranTCOM}
% argument is your BibTeX string definitions and bibliography database(s)
\bibliography{IEEEabrv,loadAlloc19}

% that's all folks
\end{document}